\theoremstyle{plain}
\newtheorem{theorem}{Theorem}
\newtheorem{lemma}[theorem]{Lemma} 
\newtheorem{corollary}[theorem]{Corollary}
\theoremstyle{remark}
\def \eps{\varepsilon}
\author[1]{Amir Abboud}
\author[2]{Shon Feller}
\author[2]{Oren Weimann}
\affil[1]{
IBM Almaden Research Center, USA\\
\href{mailto:amir.abboud@ibm.com}{amir.abboud@ibm.com}}
\affil[2]{
University of Haifa, Israel\\
 \href{mailto:shonfeller1@gmail.com}{shonfeller1@gmail.com}, \href{mailto:oren@cs.haifa.ac.il}{oren@cs.haifa.ac.il}
}
\date{}
\title{On the Fine-Grained Complexity of Parity Problems}
\begin{document}

\maketitle

\begin{abstract}

We consider the \emph{parity} variants of basic problems studied in fine-grained complexity.
We show that finding the exact solution is just as hard as finding its parity (i.e. if the solution is even or odd) for a large number of classical problems, including All-Pairs Shortest Paths (APSP), Diameter, Radius, Median, Second Shortest Path, Maximum Consecutive Subsums, Min-Plus Convolution, and $0/1$-Knapsack.

A direct reduction from a problem to its parity version is often difficult to design. Instead, we revisit the existing hardness reductions and tailor them in a problem-specific way to the parity version.
Nearly all reductions from APSP in the literature proceed via the (subcubic-equivalent but simpler) Negative Weight Triangle (NWT) problem.
Our new modified reductions also start from NWT or a non-standard parity variant of it.
We are not able to establish a subcubic-equivalence with the more natural \emph{parity counting} variant of NWT, where we ask if the number of negative triangles is even or odd.
Perhaps surprisingly, we justify this by designing a reduction from the seemingly-harder Zero Weight Triangle problem, showing that parity is (conditionally)  strictly harder than decision for NWT.

\end{abstract}

\section{Introduction}

The blossoming field of fine-grained complexity is concerned with understanding the time complexity of basic computational problems in a precise way. 
The main approach is to hypothesize the hardness of a few core problems and then reduce them to a large number of other problems, establishing tight conditional lower bounds for them.
A cornerstone finding in this field is that there is a class of more than ten problems that are all \emph{subcubic equivalent} to the {\sc All-Pairs Shortest Paths} (APSP) problem, in the sense that if any of them can be solved in truly subcubic $O(n^{3-\eps})$ time (for some $\eps>0$) then all of them can.
Most of the problems in this ``APSP-class'' are related to distance computations in graphs such as computing the radius of the graph or deciding if the graph contains a negative weight triangle (NWT).
In this work, we investigate the fine-grained complexity of the natural \emph{parity versions} of such problems: are they easier, harder, or do they have the same time complexity?
\newpage
\noindent Depending on the problem, the natural parity version could have a different type; let us consider the two main types that will appear in this paper and illustrate them with examples.
\begin{itemize}
\item \textbf{Parity Computation:} The {\sc Radius-Parity} problem asks whether the radius of the graph is even or odd. Similarly, the {\sc APSP-Parity} problem asks to compute, for each pair of nodes, whether the distance between them is even or odd. This type is often natural for optimization problems.
\item \textbf{Parity Counting:} The {\sc NWT-Parity} problem asks if the number of negative triangles in the graph is even or odd, or equivalently, it asks to count the number of negative triangles modulo $2$. Similarly, {\sc SAT-Parity} asks if the number of satisfying assignments to a given formula is even or odd. This type is often natural for decision problems where we are looking for a solution satisfying a certain property\footnote{The parity counting version could be viewed simply as the parity computation version of the counting version of the problem, so the first type could be considered the ``real'' parity version. However, parity counting is widely referred to as the parity version in the literature.
}.
\end{itemize}

The parity computation versions are clearly no harder than the original problem: If we know the radius exactly we also know its parity (if it is even or odd). In fact, sometimes knowing the parity can be much easier than computing the entire answer. For instance, while computing the maximum number of nodes in a matching  requires super-linear time, knowing the parity is trivial (it is always $0$).
On the other hand, parity counting versions can make the problem much harder. A famous example is $2${\sc -SAT}: the decision version takes linear time but the parity version is probably not in P~\cite{valiant2006accidental}. Another example is computing the  permanent of a matrix, which is not expected to be in P (since then $\#$P collapses to P), yet computing the {\em parity} of the permanent is in P (since it is the same as the parity of the determinant). 
Thus, in general, the natural parity version could be easier or harder than the original problem.

Various questions related to parity arose naturally in different contexts in computer science throughout the years.
For instance, the {\sc SAT-Parity} problem played a key role in the proof of Toda's theorem~\cite{TodasTheorem}, which is one of the earliest and most fundamental results in the large body of works on counting complexity~\cite{fortnow1997counting}. There, parity counting problems are extensively studied, being of intermediate complexity between the decision problems and the counting problems (see e.g.~\cite{valiant1979complexity,papadimitriou1982two,valiant1985np,beigel1998np,arvind2002graph,valiant2006accidental,valiant2018some}). Another example is Seidel's $\tilde{O}(n^\omega)$ algorithm~\cite{Seidel} for APSP in unweighted undirected graphs which computes APSP parity as part of the algorithm.
The first type of problems are less studied in terms of worst-case complexity but are morally related to hard-core predicates in cryptography~\cite{vasco2001survey} where it is desirable that the parity (least significant bit) of a function is hard to guess.
The motivation for our work is twofold: first, many of the parity versions are interesting on their own right and we would like to know their complexity, and second, this investigation could lead to a deeper understanding of the structure among the original (non-parity) versions.

\subsection{Our Results}

\paragraph{The APSP Class.} 
Our first set of results concern the APSP equivalence class. We have gone through the problems in this class from the works of~\cite{AbboudEtAl,EricIO,WilliamsNegativeTriangle} and tried to classify the complexity of their parity versions.
Our first theorem shows that, with the notable exception of {\sc Negative Weight Triangle} (NWT), all the parity versions are subcubic-equivalent to APSP and therefore also to the original (non-parity) problems. 
These problems and their parity versions are listed and defined in Table~\ref{table:Table1} together with our results for each of them and where they appear in the paper.

\begin{table}[h!]
\begin{tabular}{|p{2.3cm}|p{5.7cm}|p{7.2cm}|}
\hline 
 Problem & Definition & Complexity \\ \hline
 
 Median & $\min_{u}\sum_{v}d(u,v)$ & \begin{tabular}[c]{@{}l@{}}SE to APSP~\cite{AbboudEtAl},\\ \textbf{Parity is SE to APSP (Sec.~\ref{sec:MedianParity})}\end{tabular} \\ \hline

\begin{tabular}[c]{@{}l@{}}Wiener Index\end{tabular} & $\sum_{u}\sum_{v}d(u,v)$ & \begin{tabular}[c]{@{}l@{}}SE to APSP~\cite{EricIO},\\ \textbf{Parity is SER to APSP} \textbf{(Sec.~\ref{subsec:NTPtoWIP},~\ref{subsec:NTPtoWIPundirected})}\end{tabular} \\ \hline

Radius & $\min_{u}\max_{v}d(u,v)$ & \begin{tabular}[c]{@{}l@{}}SE to APSP~\cite{AbboudEtAl},\\ \textbf{Parity is SE to APSP (Sec.~\ref{sec:RadiusAndDiameterParity})}\end{tabular} \\ \hline

\begin{tabular}[c]{@{}l@{}}Sum of\\ Eccentricities \end{tabular}
& $\sum_{u}\max_{v}d(u,v)$ & \begin{tabular}[c]{@{}l@{}}\textbf{SE to APSP, } \textbf{(Sec.~\ref{sec:sumofeccentricities})},\\ \textbf{Parity is SER to APSP} \textbf{(Sec.~\ref{subsec:NTPtoSoE})}\end{tabular} \\ \hline

\begin{tabular}[c]{@{}l@{}}Integer\\ Betweenness\\Centrality\end{tabular} & \begin{tabular}[c]{@{}l@{}}find the number of vertices pairs \\ with a shortest path passing \\through a given vertex $x$\end{tabular} & \begin{tabular}[c]{@{}l@{}}SE to APSP~\cite{AbboudEtAl},\\ $(1+\eps)$-approx. is SER to Diameter~\cite{AbboudEtAl}\\ \textbf{Parity is SER} \textbf{to APSP}  \textbf{(Sec.~\ref{subsec:NTPtoIntegerBCParity},~\ref{subsec:APSPtoIntegerBCParity})}\end{tabular} \\ \hline

\begin{tabular}[c]{@{}l@{}}Second \\Shortest Path\end{tabular} & \begin{tabular}[c]{@{}l@{}}
given vertices $s,t$, find the length\\  of the second shortest $s$-to-$t$ path
\end{tabular} & \begin{tabular}[c]{@{}l@{}}SE to APSP~\cite{WilliamsNegativeTriangle}, \\ \textbf{Parity is SE to APSP (Sec.~\ref{subsec:ReplacementPath})}\end{tabular} \\ \hline

\begin{tabular}[c]{@{}l@{}}Maximum \\Subarray  \end{tabular}& \begin{tabular}[c]{@{}l@{}}given a matrix, find the maximum \\ total value in a submatrix\end{tabular} & \begin{tabular}[c]{@{}l@{}}SE to APSP~\cite{MaxSubArray,MaxArrayToMatrixMulti},\\ \textbf{Parity is SE to APSP (Sec.~\ref{subsec:MaxSubParity})}\end{tabular} \\ \hline

APSP & Compute all distances $d(u,v)$  & \begin{tabular}[c]{@{}l@{}}\textbf{Parity is SE to APSP (Sec.~\ref{sec:minplusmatrixmul})}\end{tabular} \\ \hline
 
\begin{tabular}[c]{@{}l@{}} Min-Plus \\Matrix\\ Multiplication \end{tabular} & \begin{tabular}[c]{@{}l@{}}
given $n\times n$ matrices $A$ and $B$,\\compute the matrix $C$ where\\ $C[i,j]=\min_k \{A[i,k]+B[k,j]\}$
\end{tabular} & \begin{tabular}[c]{@{}l@{}}SE to APSP (folklore), \\ \textbf{Parity is SE to APSP (Sec.~\ref{sec:minplusmatrixmul})}\end{tabular} \\ \hline

\begin{tabular}[c]{@{}l@{}}Replacement \\Paths \end{tabular} & \begin{tabular}[c]{@{}l@{}}
for every edge $e$ on a shortest\\ $s$-to-$t$  path, find the length of the \\shortest $s$-to-$t$ path that avoids $e$
\end{tabular} & \begin{tabular}[c]{@{}l@{}}SE to APSP~\cite{WilliamsNegativeTriangle}, \\ \textbf{Parity is SE to APSP (Sec.~\ref{subsec:ReplacementPath})}\end{tabular} \\ \hline

\begin{tabular}[c]{@{}l@{}}Negative\\ Weight \\Triangle\end{tabular} & \begin{tabular}[c]{@{}l@{}}determine if there is a triangle \\of total negative weight \end{tabular} & \begin{tabular}[c]{@{}l@{}}SE to APSP~\cite{WilliamsNegativeTriangle}, $(1+\epsilon)$-approx.  Counting \\ is SER to APSP~\cite{ApproximateCountingProblems}.\\ \textbf{Randomized reductions from APSP}\\
\textbf{and 3SUM to Parity and Counting}\\ \textbf{(Sec.~\ref{subsec:ZWTCountingParityToNegTriangleCountingParity},~\ref{subsec:ZWTcountingParity}), Vertex Parity is SER }  \\ \textbf{to APSP}  \textbf{{(Sec.~\ref{subsec:NTtoNTP})}}\end{tabular}
\\ \hline

\begin{tabular}[c]{@{}l@{}}Zero\\ Weight\\ Triangle  \end{tabular}& \begin{tabular}[c]{@{}l@{}}determine if there is a triangle of \\total zero weight\end{tabular} & \begin{tabular}[c]{@{}l@{}}Reduction from APSP and 3SUM~\cite{ZWT},\\ \textbf{Randomized reduction to Parity and}\\ \textbf{Vertex Parity} \textbf{(Sec.~\ref{subsec:NTtoNTP},~\ref{subsec:ZWTcountingParity})}
\end{tabular} \\ \hline

\end{tabular}
\vspace{0.1in}
\caption{The APSP class: problem definitions, known results, and our results (in bold). We denote subcubic equivalent as SE and as SER when it is under a randomized reduction.
The first seven problems output a single value and their parity version computes the parity of this value. The next three problems output multiple values and their parity version computes the parity of every value. The last two problems are the only parity counting problems, in which we distinguish between the parity version (asking for the parity of the number of such triangles) and the vertex parity version (asking for the parity of the number of vertices that belong to such triangles).
 \label{table:Table1}}
\end{table}

\clearpage

\begin{theorem}\label{thm:main}
The following problems are subcubic-equivalent:
\begin{itemize}
\item All-Pairs Shortest Paths and its parity computation,
\item Min-Plus Matrix Multiplication and its parity computation,
\item Radius and its parity computation,
\item Median and its parity computation,
\item Wiener Index and its parity computation,
\item Replacement Paths and its parity computation,
\item Second Shortest Path and its parity computation,
\item Vertex in Negative Weight Triangle and its parity computation,
\item Integer Betweenness Centrality and its parity computation,
\item Maximum Subarray and its parity computation,
\item Sum of Eccentricities\footnote{This natural problem was not considered before to our knowledge, but it is closely related to Median, Radius, and the other distance computation problems.} and its parity computation.
\end{itemize}
\end{theorem}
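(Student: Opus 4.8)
The plan is to realize \Cref{thm:main} as one large web of subcubic-preserving reductions, all anchored at APSP; write $P\le_3 Q$ for such a reduction. For every problem $P$ in the list the ``easy'' direction is immediate: knowing a value determines its parity, so $P\text{-parity}\le_3 P$, and the original $P$ is already known --- or, for Sum of Eccentricities, newly shown in Section~\ref{sec:sumofeccentricities} --- to satisfy $P\le_3\text{APSP}$; for the multi-output problems this holds entrywise. Hence the real content is, for each $P$, a reduction $\text{APSP}\le_3 P\text{-parity}$, and since $\text{APSP}\equiv_3\text{NWT}$ it suffices to reduce from Negative Weight Triangle (this is also how essentially all classical APSP-hardness proofs proceed). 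Once these are in place every problem listed in \Cref{thm:main} is sandwiched between APSP and APSP, and the equivalence follows by transitivity, with the auxiliary problem $\mathrm{NTP}$ introduced below joining the class in the same way. Throughout, each reduction makes at most $\polylog(n)$ calls on instances of size $O(n)$, so subcubic-preservation is routine bookkeeping; a few of the reductions are Monte Carlo.

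The three ``multi-output'' members --- APSP, Min-Plus Matrix Multiplication (Section~\ref{sec:minplusmatrixmul}), and Replacement Paths / Second Shortest Path (Section~\ref{subsec:ReplacementPath}) --- I would handle with a witness-recovery trick. To compute $C[i,j]=\min_k\{A[i,k]+B[k,j]\}$ from a Min-Plus-parity oracle, run it on $\hat A[i,k]=N\cdot A[i,k]$ and $\hat B^{(\ell)}[k,j]=N\cdot B[k,j]+g_\ell(k)$, where $g_\ell(k)=2k+(\text{bit }\ell\text{ of }k)$ is strictly increasing with $g_\ell(k)\bmod 2$ equal to bit $\ell$ of $k$, and $N$ is a power of two exceeding $\max_k g_\ell(k)$. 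Since $N$ is even and the $N(A[i,k]+B[k,j])$ term dominates the tie-breaker $g_\ell(k)$, the returned parity is exactly bit $\ell$ of the smallest minimizer $k^{*}[i,j]$; with $O(\log n)$ queries one recovers every $k^{*}[i,j]$ and then $C[i,j]=A[i,k^{*}]+B[k^{*},j]$ in $O(n^{2})$ extra time. Since a single $(\min,+)$-product is the distance matrix of a $3$-layer DAG, Min-Plus-parity $\le_3$ APSP-parity, which closes the loop among these four problems; Replacement Paths and Second Shortest Path follow from the analogous adaptation of Williams's reduction.

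For the single-output optimization problems --- Radius (Section~\ref{sec:RadiusAndDiameterParity}), Median (Section~\ref{sec:MedianParity}), Wiener Index, Sum of Eccentricities, Integer Betweenness Centrality, and Maximum Subarray (Section~\ref{subsec:MaxSubParity}) --- the strategy is to revisit the textbook $\text{NWT}\le_3 P$ reduction and re-engineer its gadgets so that the constructed instance has the sharpened behaviour ``the $P$-value equals a known constant $C^{*}$ if $G$ has no negative triangle, and equals a value of the opposite parity to $C^{*}$ otherwise''; a $P$-parity oracle then decides NWT. To keep these adaptations uniform I would route them all through a single non-standard ``parity-flavoured'' triangle problem $\mathrm{NTP}$ --- emphatically \emph{not} the parity-counting version of NWT, which \Cref{thm:main} deliberately omits --- prove $\text{NWT}\le_3\mathrm{NTP}$ in Section~\ref{subsec:NTtoNTP}, and then adapt each classical reduction to start from $\mathrm{NTP}$. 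For the aggregate (``sum''-type) targets --- Wiener Index, Sum of Eccentricities, Integer Betweenness Centrality --- one further ingredient is needed: to force an aggregate quantity to expose a single parity bit one must isolate the contribution of a single relevant triangle using random weight perturbations, which is why those reductions are randomized.

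The hard part is precisely this parity-control step. In the value-level reductions it suffices that ``the answer drops below a threshold when a negative triangle exists,'' but here the reduction must land the answer on one of two \emph{consecutive} integers --- or two integers of prescribed opposite parity --- with no slack, simultaneously in the ``yes'' and ``no'' cases, and it must account for the joint effect of \emph{all} candidate triangles rather than one convenient triangle; perturbations that are cosmetic for the original reduction routinely flip parities and must be tracked exactly, typically by scaling all weights by a small even factor and then threading carefully chosen $\pm 1$ offsets through the gadget. A secondary difficulty is designing $\mathrm{NTP}$ so that it is at once reachable from NWT, expressive enough to seed every downstream parity reduction, and itself no harder than APSP --- while steering clear of the parity-counting variant of NWT, which (as the introduction previews, via a reduction from Zero Weight Triangle) appears to be conditionally strictly harder.
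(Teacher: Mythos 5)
Your high-level framework --- $P\text{-parity}\le_3 P\le_3\text{APSP}$ on the easy side and $\text{NWT}\le_3 P\text{-parity}$ on the hard side --- matches the paper, your bit-extraction scheme for Min-Plus Matrix Multiplication Parity is essentially the argument of \cref{subsec:minplusmatrixmultoparity} (you fuse the tie-breaking transform and the per-bit query, the paper keeps them separate; both work), and your instinct to isolate a non-counting ``vertex-parity'' triangle problem as the randomized hub for the aggregate targets (Wiener Index, Sum of Eccentricities, Integer Betweenness Centrality) is exactly what the paper does with Negative Triangle Vertex Parity in \cref{sec:NTP}. Where there is a genuine gap is in the mechanism you assign to the remaining single-output problems, and Median is the cleanest counterexample. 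Your plan --- re-engineer the $\text{NWT}\le_3 P$ gadget so that the $P$-value equals a fixed constant $C^{*}$ when $G$ has no negative triangle and a value of opposite parity otherwise --- cannot be realized there: in the Median gadget of~\cite{AbboudEtAl} the answer when a negative triangle exists is $(16n-1)H+\min_u\sum_v F(u,v)$, and the correction term is an uncontrolled negative integer whose parity depends on the input and cannot be fixed in advance by scaling and $\pm 1$ perturbations of the gadget alone.

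The paper's actual device (\cref{subsec:NegativeTriangleToMedianParity}) is qualitatively different and is the ingredient you are missing: it does not try to decide NWT with a parity query, but binary-searches for the \emph{identity} of the median vertex of $G'$. After scaling all weights by $4n$, one maintains a candidate set $T\subseteq A$, picks a half $S\subseteq T$, temporarily subtracts $1$ from the $S$-to-$B$ and $S$-to-$C$ edges and adds $1$ to the $S$-to-$B'$ edges --- dropping the row-sum of exactly the $S$-vertices by $n$ (odd) and leaving all other row-sums even --- so the Median Parity oracle reveals which half contains the median vertex; after $O(\log n)$ rounds the median vertex is in hand and one checks directly in $G$ whether it lies on a negative triangle. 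This ``localize a witness vertex by temporary $\pm1$ perturbations'' trick recurs for Max Row Sum Parity (\cref{sec:maxrowsum}) and Tree Sparsity Parity (\cref{sec:TreeSparsity}), so it is a central tool you would need. Two further departures from your single-hub architecture: Maximum Subarray, Replacement Paths and Second Shortest Path go through a third intermediate, Minimum Weight Triangle Parity (\cref{subsec:APSPtoMinimumWeightTriangleParity}), reduced deterministically from NWT and engineered so that the parity of the minimum-weight triangle through \emph{every} vertex is controlled; and Radius and Diameter are reduced to their own parity versions by a threshold gadget (\cref{sec:RadiusAndDiameterParity}), not from any triangle problem at all.
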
 

This adds more than ten natural problems to the APSP-equivalence class. 
For all problems in Theorem~\ref{thm:main}, the reduction from the parity version to the original problem is straightforward (since they are parity computation rather than parity counting problems), while the reduction in the other direction is not. 
For instance, it is not at all clear how to establish the hardness of {\sc Median-Parity} by reducing from {\sc Median} to it. 
Instead, we find it much more convenient to start from NWT which is the canonical APSP-complete problem and the starting point for nearly all APSP-hardness reductions.

Some of our results take the known reductions from NWT and modify them to establish the hardness of the parity versions, e.g. for {\sc Median-Parity}. Notably, reductions of this kind are deterministic.
For other parity problems such as {\sc Wiener-Index} it is more convenient to reduce from a parity version of NWT.
However, (the most natural) {\sc NWT-Parity} is a parity counting problem which makes it seem harder than NWT and therefore inappropriate as a starting point for reductions.
Instead, we identify a different variant that we call {\sc NWT-Vertex-Parity} (asking if the number of vertices that belong to a negative triangle is even or odd) which turns out to be subcubic-equivalent to NWT and a very useful intermediate problem. 
Reductions of this kind seem to require randomization.

Finally, we investigate the intriguing {\sc NWT-Parity} problem.
This is the modulo 2 version of the {\sc NWT-counting}  problem (asking for the number of negative triangles) that was recently studied by Dell and Lapinskas~\cite{ApproximateCountingProblems} in their work on the fine-grained complexity of approximate counting.
With standard subsampling techniques, one can show that NWT reduces to {\sc NWT-Parity}. But are they subcubic-equivalent?
We show that such an equivalence would imply breakthroughs in fine-grained complexity, therefore suggesting that the parity version is strictly harder. 
Our next theorem shows that {\sc NWT-Parity} can solve a problem that is considered strictly harder than APSP: the problem of deciding whether a graph has a {\sc Zero Weight Triangle} (ZWT).
As discussed below, if the same reduction can be shown between the original (non-parity) problems it would be a major breakthrough.

\begin{theorem}\label{thm2}
There is a deterministic subcubic-reduction from the Zero Weight Triangle Parity problem to the Negative Weight Triangle Parity problem.
\end{theorem}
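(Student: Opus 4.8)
The plan is to exploit the fact that, unlike for the \emph{decision} versions, counting triangles modulo $2$ lets us \emph{subtract} two threshold counts. Throughout assume (as is standard for these problems) that the edge weights are integers, and for a weighted graph $H$ and an integer $t$ let $N_{<t}(H)$ be the number of triangles of $H$ of total weight less than $t$; write $Z(H)$ for the number of triangles of $H$ of total weight exactly $0$, and $W_T$ for the total weight of a triangle $T$. Because every triangle weight is an integer, a triangle has weight $0$ iff it has weight $<1$ but not weight $<0$, so
\[
Z(G)\;=\;N_{<1}(G)\;-\;N_{<0}(G).
\]
The term $N_{<0}(G)$ is exactly the number of negative-weight triangles of $G$, i.e.\ the quantity whose parity {\sc NWT-Parity} reports on input $G$.

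For the term $N_{<1}(G)$ I would build a second graph $G'$ on the same vertex and edge set as $G$ by replacing every edge weight $w$ with $3w-1$ (again polynomially-bounded integers). A triangle with $G$-weight $W_T$ has $G'$-weight $3W_T-3$, which is negative iff $W_T<1$ iff $W_T\le 0$; since $G$ and $G'$ have the same triangles, $N_{<1}(G)=N_{<0}(G')$. (In the tripartite formulation of {\sc NWT} one can avoid rescaling and instead just subtract $1$ from every edge weight on one fixed side, since each triangle contains exactly one such edge.) Combining the two displays and reducing modulo $2$,
\[
Z(G)\;\equiv\;N_{<0}(G')+N_{<0}(G)\pmod 2 .
\]
Equivalently, letting $H$ be the disjoint union of a copy of $G$ and a copy of $G'$ (a $2n$-vertex graph with no triangle meeting both copies), $Z(G)\equiv N_{<0}(H)\pmod 2$, so a single call to {\sc NWT-Parity} on $H$ answers {\sc ZWT-Parity} on $G$. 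Since $H$ is built from $G$ in $O(n^2)$ time and has $O(n)$ vertices with polynomially-bounded integer weights, this is a deterministic subcubic reduction (in fact a fine-grained one with a single oracle call).

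I expect no step here to be technically hard; the difficulty is conceptual — recognizing that the mod-$2$ count, unlike the decision bit, is additive, so that the trivial identity $Z=N_{<1}-N_{<0}$ suddenly becomes usable once each term is available as an {\sc NWT-Parity} query. This is precisely why the same manipulation fails for the non-parity problems (one cannot subtract two ``a triangle exists'' bits), which is what makes the analogous reduction between {\sc ZWT} and {\sc NWT} a potential breakthrough. The only points that need (routine) care are: verifying the weight shift keeps the instance a legal {\sc NWT-Parity} input (integral, polynomially bounded), checking that the transformation changes only \emph{which} triangles are negative and not the \emph{set} of triangles counted, and matching the ``$<0$ versus $\le0$'' convention in the definition of {\sc NWT}; the argument is robust to the last point, one only adjusts the additive shift.
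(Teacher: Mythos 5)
Your proof is correct, and the underlying algebraic identity is the same as the paper's — both rely on expressing the zero-weight count as an integer linear combination of negative-triangle counts, which becomes usable under parity precisely because parity (unlike the decision bit) is additive. The paper's decomposition is $\Delta^{0}=\Delta-\Delta^{+}-\Delta^{-}$, obtaining the total triangle count $\Delta$ via an $O(n^{\omega})$ matrix multiplication (or a degenerate NWT query with all weights set to $-1$) and $\Delta^{+}$ by negating all weights, for a total of two or three calls to the oracle. Your decomposition $Z=N_{<1}-N_{<0}$ is algebraically equivalent (since $N_{<1}=\Delta-\Delta^{+}$ and $N_{<0}=\Delta^{-}$), but it sidesteps computing $\Delta$ by realizing $N_{<1}(G)$ directly as $N_{<0}(G')$ for the affinely-rescaled graph $G'$ with weights $w\mapsto 3w-1$, and the disjoint-union trick further collapses the two parity queries into a single oracle call on a $2n$-vertex graph. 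So you reach the same conclusion by a tighter route: no auxiliary matrix multiplication, and one query instead of two or three. The only hygienic points — integrality and polynomial boundedness of the rescaled weights, and the fact that a disjoint union creates no cross-triangles — are exactly the ones you flag, and they all check out.
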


The ZWT problem is considered one of the ``hardest'' $n^3$-problems since a subcubic algorithm for it would refute \emph{two} of the main conjectures in fine-grained complexity: it would give a subcubic algorithm for APSP and a subquadratic algorithm for 3SUM~\cite{ZWT,patrascu2010towards,WilliamsNegativeTriangle}.
The {\sc 3SUM} Conjecture states that we cannot decide in truly subquadratic $O(n^{2-\eps})$ time if among a set of $n$ numbers there are three that sum to zero. The class of problems that are 3SUM-hard contains dozens of problems mostly from computational geometry (see~\cite{ComputationalGeo3SUM,King2004} for a partial list), but also in other domains, e.g. \cite{abboud2014popular,kopelowitz2016higher,patrascu2010towards}.
One of the central open questions in the field is whether the APSP class and the 3SUM class can be unified; in particular, whether APSP is 3SUM-hard.
One way to prove this is to reduce ZWT to APSP, and our result shows that NWT-Parity to NWT suffices:

\begin{corollary}
If the Negative Weight Triangle Parity problem is subcubic-equivalent to the Negative Weight Triangle problem, then APSP is 3SUM-hard. 
\end{corollary}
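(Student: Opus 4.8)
The plan is to stitch together five known or just-proven fine-grained reductions into a single reduction from 3SUM to APSP; since this means a truly subcubic algorithm for APSP would break the 3SUM conjecture, it is exactly the statement that APSP is 3SUM-hard. Concretely, I would establish the chain
\[
\textsc{3SUM} \preceq \textsc{ZWT} \preceq \textsc{ZWT-Parity} \preceq \textsc{NWT-Parity} \preceq \textsc{NWT} \preceq \textsc{APSP},
\]
where $X \preceq Y$ means there is a (possibly randomized) reduction turning a truly subcubic algorithm for $Y$ into one for $X$ (truly subquadratic, when $X$ is \textsc{3SUM}).

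Four of the links come essentially for free. The classical reductions give $\textsc{3SUM} \preceq \textsc{ZWT}$~\cite{ZWT}, and the subcubic-equivalence of \textsc{NWT} and \textsc{APSP}, in particular $\textsc{NWT} \preceq \textsc{APSP}$~\cite{WilliamsNegativeTriangle}. The link $\textsc{ZWT-Parity} \preceq \textsc{NWT-Parity}$ is exactly Theorem~\ref{thm2}, and the link $\textsc{NWT-Parity} \preceq \textsc{NWT}$ is one half of the equivalence posited in the hypothesis. The only remaining link, $\textsc{ZWT} \preceq \textsc{ZWT-Parity}$, is the same standard randomized subsampling (``isolation'') argument that reduces \textsc{NWT} to \textsc{NWT-Parity}: writing the instance as a tripartite graph on parts $V_1, V_2, V_3$, for each scale $i = 0, 1, \dots, O(\log n)$ keep every vertex independently with probability $2^{-i}$; if there is no zero triangle then every subsampled instance has an even number (namely $0$) of zero triangles, whereas if there is at least one, then at the appropriate scale exactly one survives with constant probability, making that instance's parity odd, and $O(1)$ independent repetitions amplify the success probability. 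Hence a truly subcubic algorithm for \textsc{ZWT-Parity} yields a truly subcubic randomized algorithm for \textsc{ZWT}. (Alternatively, one may bypass \textsc{ZWT-Parity} and Theorem~\ref{thm2} altogether and plug in the paper's direct randomized reduction from \textsc{3SUM} to \textsc{NWT-Parity}.)

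Composing the five links under the hypothesis, a truly subcubic algorithm for \textsc{APSP} yields, in turn, truly subcubic algorithms for \textsc{NWT}, \textsc{NWT-Parity}, \textsc{ZWT-Parity} and \textsc{ZWT}, and finally a truly subquadratic algorithm for \textsc{3SUM}, refuting the 3SUM conjecture; this is precisely 3SUM-hardness of \textsc{APSP} (under randomized reductions, since the subsampling link---and possibly the assumed equivalence---is randomized). I expect the only real chore to be bookkeeping: one has to check that the polynomial overheads of the five reductions compose so that ``truly subcubic'' survives at each $n^3$-problem and ``truly subquadratic'' survives at \textsc{3SUM}. Since every link is a genuine subcubic/subquadratic reduction with only polynomial slack, this is routine; all of the substance lives in Theorem~\ref{thm2} and in the two classical black-box reductions invoked above.
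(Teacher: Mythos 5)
Your high-level reduction chain
\[
\textsc{3SUM} \preceq \textsc{ZWT} \preceq \textsc{ZWT-Parity} \preceq \textsc{NWT-Parity} \preceq \textsc{NWT} \preceq \textsc{APSP}
\]
is exactly the decomposition the paper intends, and four of the five links are invoked correctly. The gap is in your justification of $\textsc{ZWT} \preceq \textsc{ZWT-Parity}$: the geometric-scale ``isolation'' argument you sketch does not work. If you keep each vertex independently with probability $p = 2^{-i}$, the claim that ``at the appropriate scale exactly one zero triangle survives with constant probability'' fails as soon as triangles share vertices. For instance, if all $T$ zero triangles of the tripartite instance pass through a common $a \in V_1$ and $b \in V_2$ and differ only in the $V_3$-vertex, then the probability that exactly one survives is $p^2 \cdot T p (1-p)^{T-1} = O(1/T^2)$ at \emph{every} scale $p$---the factor $p^2$ needed to retain $a$ and $b$ destroys isolation, so no $O(\log n)$ family of scales fixes this. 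The paper's actual reduction in Section~\ref{subsec:ZWTcountingParity} sidesteps this with a two-stage randomization tailored to the tripartite structure: delete each $B$-to-$C$ \emph{edge} independently with probability $1/2$, so that the zero triangles through any fixed $u_A$ are in bijection with independently-deleted edges and hence their surviving count is odd with probability exactly $1/2$; then independently delete each $A$-vertex with probability $1/2$, which (conditioned on $u_A$'s count being odd) makes the global parity odd with probability $1/2$, for overall success probability $1/4$. That is the correct replacement for your isolation step. Separately, your parenthetical about ``bypassing'' $\textsc{ZWT-Parity}$ and Theorem~\ref{thm2} by plugging in a direct reduction from 3SUM to $\textsc{NWT-Parity}$ is not an alternative: the paper's reduction from 3SUM to $\textsc{NWT-Parity}$ is itself assembled from exactly the links $\textsc{3SUM} \preceq \textsc{ZWT} \preceq \textsc{ZWT-Parity} \preceq \textsc{NWT-Parity}$, so there is nothing to bypass.
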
 

A more quantitative reason for supposing that ZWT is harder than NWT is that their current upper bounds, while all mildly-subcubic, are significantly far apart.
All the problems in the APSP equivalence class can be solved in $n^3/2^{\Omega(\sqrt{\log n})}$ time~\cite{WilliamsAPSP}, which is faster than $O(n^3/\log^c{n})$ for all $c>0$. 
For ZWT, on the other hand, nothing better than $O(n^3/\log^c{n})$ is known for a small $c\leq 2$, and even small improvements would lead to a faster mildly subquadratic algorithm for 3SUM beating the current fastest $O(n^2 (\log\log n)^2/\log n)$~\cite{Best3Sum}.
Dell and Lapinskas~\cite{ApproximateCountingProblems} achieve an $n^3/2^{\Omega(\sqrt{\log n})}$ upper bound for the approximate counting version of NWT but not for exact counting. We show that even for the parity version such a result has breakthrough consequences for 3SUM.
For NWT, this (conditionally) separates the counting and parity versions from the decision and approximate counting versions.

\begin{table}[h!]
\begin{tabular}{|p{2.3cm}|p{5.7cm}|p{7.2cm}|}
\hline 
 Problem & Definition & Complexity \\ \hline
 Diameter & $\max_{u}\max_{v}d(u,v)$ & \begin{tabular}[c]{@{}l@{}}
\textbf{Parity is SE to Diameter (Sec.~\ref{sec:RadiusAndDiameterParity})}\end{tabular} \\ \hline

Maximum Row Sum & $\max_{u}\sum_{v}d(u,v)$ & \begin{tabular}[c]{@{}l@{}}\textbf{Reduction from Co-Negative Triangle} \\ \textbf{to Parity (Sec.~\ref{sec:maxrowsum})}\end{tabular} \\ \hline

\begin{tabular}[c]{@{}l@{}}Reach \\Centrality\end{tabular} & \begin{tabular}[c]{@{}l@{}}compute the maximum distance \\between a given vertex $x$ and the\\closest endpoint of any shortest\\path passing through $x$\end{tabular} & \begin{tabular}[c]{@{}l@{}}SE to Diameter~\cite{AbboudEtAl},\\ \textbf{Parity is SE to Diameter} \textbf{(Sec.~\ref{sec: ReachCentralityParity})}\end{tabular}\\ \hline

\begin{tabular}[c]{@{}l@{}} $0/1$-Knapsack \end{tabular} & \begin{tabular}[c]{@{}l@{}}
given items $(w_{i},v_{i})$ and a weight\\ $t$, find a subset  $I$ that maximizes \\$\sum_{i\in I} v_{i}$ subject to $\sum_{i\in I} w_{i} \leq t$
\end{tabular} & \begin{tabular}[c]{@{}l@{}}SQER to Min-Plus Convolution, variants \\are SQE to Min-Plus Convolution~\cite{MinPlusConvEquivalences,DynamicProgramingEquivalences}, \\\textbf{Parity is} \textbf{SQER to Min-Plus }\\\textbf{Convolution, variants are SQE to} \\\textbf{Min-Plus Convolution (Sec.~\ref{sec:minplusknapsackparity})}\end{tabular} \\ \hline

\begin{tabular}[c]{@{}l@{}} Tree Sparsity \end{tabular} & \begin{tabular}[c]{@{}l@{}}
given a node-weighted tree, find \\ the maximum weight of a subtree \\ of size $k$
\end{tabular} & \begin{tabular}[c]{@{}l@{}} SQE to Min-Plus Convolution~\cite{MinPlusConvEquivalences,TreeSparsityToMinPlusConv},\\ \textbf{Parity is} \textbf{SQE to Min-Plus}\\ \textbf{Convolution (Sec.~\ref{sec:TreeSparsity})}\end{tabular} \\ \hline

\begin{tabular}[c]{@{}l@{}} \hspace{-0.5mm}Min-Plus \\Convolution \end{tabular} & \begin{tabular}[c]{@{}l@{}}
given $n$-length vectors $A$ and $B$,\\compute the vector $C$ where\\ $C[k]= \min_{i+j=k} \{A[i]+B[j]\}$
\end{tabular} & \begin{tabular}[c]{@{}l@{}}Reduction to APSP and 3SUM~\cite{MinPlusConvEquivalences,ConvolutionsAndNecklaces},\\ \textbf{SQE to Parity} \textbf{(Sec.~\ref{subsec:MinplusConvToParity})}\end{tabular} \\ \hline

\begin{tabular}[c]{@{}l@{}} Maximum \\Consecutive \\Subsums \end{tabular} & \begin{tabular}[c]{@{}l@{}}
given an $n$-length vector $A$,\\compute the vector $B$ where\\ $B[k]= \max_{i} \{\sum_{j=0}^{k-1}A[i+j]\}$
\end{tabular} & \begin{tabular}[c]{@{}l@{}}SQE to Min-Plus Convolution~\cite{MinPlusConvEquivalences,MaxConsSubsum}, \\ \textbf{Parity is SQE to Min-Plus}\\ \textbf{Convolution (Sec.~\ref{subsec:MaxConsSumToParity})}\end{tabular} \\ \hline

\begin{tabular}[c]{@{}l@{}}Co-Negative \\Triangle\end{tabular} & \begin{tabular}[c]{@{}l@{}}find a vertex that does not \\belong to any negative triangle\end{tabular} & \begin{tabular}[c]{@{}l@{}}Reduction to Diameter~\cite{Complementary}, \textbf{Reduction to }\\\textbf{Maximum Row Sum Parity} \textbf{(Sec.~\ref{sec:maxrowsum})}\end{tabular} \\ \hline

\end{tabular}
\vspace{0.1in}
\caption{The other (non-APSP) problems. We denote subcubic (subquadratic) equivalent as SE (SQE) and as SER (SQER) when it is under a randomized reduction.
As before, the parity version of problems that output a single (multiple) value(s) computes the parity of this value (all these values). The last problems is the only parity counting problem, in which the parity version asks for the parity of the number of vertices that do not belong to any negative triangle.
\label{table:Table2}}
\end{table}

\paragraph{Other Classes.}
In our second set of results we ask whether parity computation problems are as hard also for problems that are outside the APSP class.
We have gone through other fine-grained complexity results from the works of~\cite{MinPlusConvEquivalences,TreeSparsityToMinPlusConv,DynamicProgramingEquivalences,MaxConsSubsum} and tried to establish the same results for the parity versions.
All problems we consider are defined in Table~\ref{table:Table2} together with our results and where they appear in the paper.
The general message is that, in all cases we considered, the same hardness reductions (if modified carefully) can establish the hardness of the (seemingly easier) parity version as well.
We mention a few concrete examples.

In the context of distance computations in graphs, the central open question is whether the {\sc Diameter} problem is subcubic equivalent to APSP. 
Meanwhile, {\sc Diameter} has its own (smaller) equivalence class which includes problems such as {\sc Reach Centrality}~\cite{AbboudEtAl}.
We prove that {\sc Diameter-Parity} and {\sc Reach Centrality-Parity} are subcubic equivalent to {\sc Diameter}.

Another interesting problem in fine-grained complexity whose importance is rapidly increasing is the {\sc Min-Plus Convolution} problem \cite{MinPlusConvEquivalences}. 
The na\"ive algorithm for this problem runs in $O(n^2)$ time, and a truly subquadratic $O(n^{2-\eps})$ algorithm is conjectured to be impossible.
This problem is one of the easiest $n^2$ problems since it can be reduced to both APSP (i.e. a subcubic algorithm for APSP yields a subquadratic algorithm for {\sc Min-Plus Convolution}) and to 3SUM (an opposing situation to that of ZWT). This means that all of the APSP and 3SUM lower bounds can be based on this conjecture, but also that {\sc Min-Plus Convolution} is unlikely to be equivalent to either of them (as it would imply a unification of the classes).
Recently, a few other problems have been shown to be harder, e.g. \cite{ACAK2020}, or subquadratic-equivalent to it, e.g. {\sc Maximum consecutive subsums}~\cite{MaxConsSubsum,MinPlusConvEquivalences},  {\sc $0/1$-Knapsack}~\cite{DynamicProgramingEquivalences,MinPlusConvEquivalences}, and a $(1+\eps)$-approximation for {\sc Subset Sum}~\cite{SubsetSums}.
We prove that these equivalences hold for the parity versions as well (except the latter problem for which we did not find a natural parity version).
Our reduction from the {\sc Maximum consecutive subsums} problem to its parity version in Section~\ref{subsec:MaxConsSumToParity} is quite involved and it uses specific properties of the addition operator. One can obtain such a reduction indirectly and more easily via {\sc Min-Plus Convolution}, however, we believe that our reduction gives more insight into the problem and into the usage of the addition operator.

\clearpage
\begin{theorem}\label{thm:conv}
The following problems are subquadratic-equivalent:
\begin{itemize}
\item Min-Plus Convolution and its parity computation,
\item Maximum Consecutive Subsums and its parity computation,
\item $0/1$-Knapsack and its parity computation,
\item Tree Sparsity and its parity computation.
\end{itemize}
\end{theorem}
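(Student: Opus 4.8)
The plan is to route everything through {\sc Min-Plus Convolution} (abbreviated MPC below), which serves as the hub of this equivalence class. The subquadratic equivalences among the four base problems MPC, {\sc Maximum Consecutive Subsums}, {\sc $0/1$-Knapsack}, and {\sc Tree Sparsity} are already known~\cite{MinPlusConvEquivalences,MaxConsSubsum,TreeSparsityToMinPlusConv,DynamicProgramingEquivalences} (with randomization for {\sc Knapsack}), so the new content is to fold in the four parity-computation versions. For each base problem $P$, the reduction from $P$-{\sc Parity} to $P$ is immediate: solve $P$ and reduce each output entry modulo $2$. Hence it suffices to exhibit, for each $P$, a reduction from MPC to $P$-{\sc Parity}: together with $P$-{\sc Parity} $\to P \equiv$ MPC this makes $P$-{\sc Parity} equivalent to MPC, and then all eight problems are pairwise subquadratic-equivalent. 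I would obtain the MPC-to-$P$-{\sc Parity} reduction in two steps: (a) a self-reduction from MPC to MPC-{\sc Parity}; and (b) for $P \in \{$ {\sc MaxConsSubsums}, {\sc Tree Sparsity}, {\sc Knapsack} $\}$, a reduction from MPC-{\sc Parity} to $P$-{\sc Parity} obtained by retrofitting the known hardness reduction from MPC to $P$ so that it respects parities; composing (a) and (b) finishes the cycle (for $P=\mathrm{MPC}$, step (b) is vacuous).

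Step (a) is the crux and is carried out in \cref{subsec:MinplusConvToParity}. Since all weights are polynomially bounded, each $C[k]=\min_{i+j=k}(A[i]+B[j])$ has $O(\log n)$ bits, so it is enough to recover the bits of all the $C[k]$'s from least to most significant using $O(\log n)$ calls to the parity oracle, the $\ell$-th call producing bit $\ell$ of every $C[k]$. Bit $\ell$ of $C[k]$ equals the parity of $\min_{i+j=k}\lfloor (A[i]+B[j])/2^{\ell}\rfloor$, so I need an MPC instance whose convolution has this value in coordinate $k$. The obstacle is the identity $\lfloor (A[i]+B[j])/2^{\ell}\rfloor = \lfloor A[i]/2^{\ell}\rfloor + \lfloor B[j]/2^{\ell}\rfloor + c_{ij}$, where the carry bit $c_{ij}=\bigl[(A[i]\bmod 2^{\ell})+(B[j]\bmod 2^{\ell}) \ge 2^{\ell}\bigr]$ couples the two indices and is therefore not itself a $(\min,+)$-convolution of two vectors. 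The fix exploits that the residues $A[i]\bmod 2^{\ell}$ and $B[j]\bmod 2^{\ell}$ are known from the input: after an appropriate grouping/encoding of the indices by these residues the carry becomes controllable, and the resulting restricted convolutions can be folded back into a single MPC instance by padding excluded entries with a large finite ``$\infty$''. The main difficulty --- and where I expect the real work to lie --- is doing this so that the new instance stays of subquadratic size across all $O(\log n)$ levels, while keeping the reduction deterministic.

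For step (b), I would revisit each known reduction from MPC to $P$ in~\cite{MinPlusConvEquivalences,MaxConsSubsum,TreeSparsityToMinPlusConv,DynamicProgramingEquivalences} and check that it is \emph{parity-faithful}: the value returned by $P$ on the constructed instance is an affine function of the MPC answer $v$ with an odd multiplier (in these constructions the multiplier is typically $\pm 1$, and where it is even one can make it odd by a harmless rescaling), possibly after an outer minimum over a constant number of sub-instances. One then verifies that the auxiliary steps --- the prefix-sum reindexing for {\sc MaxConsSubsums}, the subtree gadgets for {\sc Tree Sparsity}, the capacity shifts and group-merging for {\sc Knapsack} --- only add known constants or take minima still attained by the same witnesses, so that reading the answer modulo $2$ recovers $v \bmod 2$. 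This yields the reduction from MPC-{\sc Parity} to $P$-{\sc Parity}; with step (a) and the trivial reverse reductions we conclude $P \equiv P$-{\sc Parity} $\equiv$ MPC for all four $P$ --- under randomized reductions in the {\sc Knapsack} case (\cref{sec:minplusknapsackparity}) and deterministically otherwise --- which is exactly \cref{thm:conv}. As an alternative to the ``via MPC'' route, one can also give direct self-reductions $P \to P$-{\sc Parity} for {\sc MaxConsSubsums} and {\sc Tree Sparsity} (\cref{subsec:MaxConsSumToParity,sec:TreeSparsity}); the {\sc MaxConsSubsums} one is more delicate since it cannot lean on the $(\min,+)$ structure and must use specific properties of integer addition.
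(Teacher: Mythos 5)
There is a genuine gap, centered on what you yourself call ``the crux'': step (a). Your plan is to recover the bits of each $C[k]$ from least to most significant, and you correctly identify that the carry bit $c_{ij}$ couples $i$ and $j$ so the floor of $(A[i]+B[j])/2^{\ell}$ is not itself a $(\min,+)$ of two vectors; but you then leave this unresolved (``after an appropriate grouping/encoding of the indices by these residues the carry becomes controllable'') and flag it as the real work. The paper avoids carries entirely by recovering the \emph{witness index} rather than the value. In \cref{subsec:MinplusConvToParity} one first enforces that the minimizing $k$ in $C[i]=\min_{j+k=i}\{A[j]+B[k]\}$ is unique (multiply $A,B$ by $n+1$ and add $k$ to $B[k]$), then for each bit position $t$ forms $\hat A=2A$ and $\hat B_t[k]=2B[k]+k_t$; the parity of $(\hat A\otimes\hat B_t)[i]$ is exactly the $t$-th bit of the unique witness $K[i]$, because doubling kills all ambient parity and the perturbation $k_t\in\{0,1\}$ cannot change the argmin. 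After $O(\log n)$ oracle calls one knows $K[i]$ for all $i$ and reads off $C[i]=A[i-K[i]]+B[K[i]]$. There are no carries to control, no grouping by residues, and no blow-up in instance size. Your approach, even if it can be made to work, is not a proof as written.

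Step (b) also does not match the paper's structure and, as stated, would not compose. {\sc $0/1$-Knapsack} and {\sc Tree Sparsity} return a single value, so a single call to their parity oracles returns one bit, whereas an MPC-Parity instance demands $n$ output bits; there is no parity-faithful way to push $n$ bits through a scalar answer in one call, so ``affine in the MPC answer with odd multiplier'' is the wrong model here. The paper instead routes through decision-problem intermediaries that are already known to be subquadratic-equivalent to MPC: {\sc Super-Additivity Testing} for Knapsack (\cref{sec:minplusknapsackparity}) and {\sc Sum$_3$} for Tree Sparsity (\cref{sec:TreeSparsity}). For Knapsack the gadget is tuned so that the optimal value is odd iff the instance is super-additive (all items doubled, one distinguished type-$B$ item shifted by $+1$), so the decision bit is read off as a parity. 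For Tree Sparsity the output value encodes $\max_{i,j}\{A[i]+B[j]+C[i+j]\}$ and the paper again performs a recursive $O(\log n)$-round witness search: double all entries, add an indicator $f(S',i)$ to $A[i]$, and let the parity of the Tree Sparsity value reveal whether the maximizing $i$ lies in $S'$. So both reductions are, in spirit, the same ``expose one bit of a hidden witness per parity call'' mechanism that drives the MPC self-reduction, not a straight parity-preserving retrofit of the known value reductions. Your remark that {\sc MaxConsSubsums} admits both a via-MPC route and a direct self-reduction using properties of integer addition is accurate and matches \cref{subsec:MaxConsSumToParity}.
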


\subsection{Related Work}

While parity counting problems are extensively studied in classical  complexity theory, the parity computation problems seem to have received less attention.
In many cases, the standard NP-hardness reduction from SAT gives instances in which the solution is always either $k$ or $k-1$, which directly implies the NP-hardness of the parity version as well.
Some of the results in fine-grained complexity also have this property. 
For example, the quadratic hardness result for {\sc Diameter} in sparse graphs~\cite{RV13} shows that it is hard to distinguish diameter $2$ from $3$ and immediately gives the same lower bound for parity.
However, for many other problems, such as the ones we consider, this is not the case and a careful problem-specific treatment is required.

Theorem~\ref{thm2} and its corollaries conditionally separate NWT from its parity and counting versions. 
Such separations are famously known in classical complexity, e.g. for $2${\sc -SAT}~\cite{valiant2006accidental}. 
In fine-grained complexity, a (conditional) separation for a variant of the {\sc Orthogonal Vectors} problem between near-linear time decision~\cite{OVDiscreteStructures} and quadratic time exact counting~\cite{OrthogonalVectorCounting} was recently achieved. 
Notably, the approximate counting version is also in near-linear time~\cite{ApproximateCountingProblems}  and the parity version is open.

The parity counting version of the Strong Exponential Time Hypothesis was studied in a seminal paper on the fine-grained complexity of NP-hard problems~\cite{OnProblemsAsHardAsCNFSAT}.
The central question left open in that paper (and is still wide open, see~\cite{AbboudSetCover}) is whether SAT can be reduced to Set-Cover in a fine-grained way; interestingly, the authors have shown such a reduction for the parity counting versions.

Exact and approximate counting problems have received a lot of attention in parameterized \cite{flum2004parameterized,curticapean2019counting} and fine-grained complexity \cite{dell}.
In a recent development, the {\sc  $k$-Clique} counting problem was shown to have worst-case to average-case reductions~\cite{BBB19,Goldreich-RonFOCS2018}.
It is likely that our result for {\sc NWT-Parity} can be extended to {\sc Negative Weight $k$ Clique Parity} showing that it is as hard as {\sc Zero Weight $k$ Clique}. 
The decision version of the latter problem was used as the basis for public-key cryptography schemes~\cite{FineGrainedCryptography}.

Due to the large amount of works on APSP-hardness and equivalences we did not manage to exhaustively enumerate all of them and investigate the complexity of the parity versions, e.g. for problems on stochastic context-free grammars~\cite{ContextFreeGrammars} or dynamic graphs~\cite{abboud2014popular,roditty-zwickESA04}. Still, we expect that the ideas in this work can be extended to show the hardness of those parity computation problems as well. 

Besides parity computation and parity counting, there is a third natural type of parity problems where we take a problem and replace one of the operations (e.g. summation) with a parity. For example, the 3XOR problem is a variant of 3SUM where we are given a set of $n$ binary vectors of size $O(\log n)$ and are asked if there is a triple whose bit-wise XOR is all zero. 3XOR is the subject of study of several papers \cite{bouillaguet2019faster,bouillaguet2018revisiting,dietzfelbinger2018subquadratic} and it seems just as hard as 3SUM but a reduction in either direction has been elusive \cite{jafargholi2016mathrm}.

\subsection{Preliminaries}

In all graph problems we assume that the graphs have $n$ nodes and $O(n^2)$ edges. In all the weighted problems we consider, we assume the weights are integers in $[-M,M]$ (and generally it is assumed that $M=poly(n)$). 

Intuitively, a fine-grained reduction \cite{WilliamsNegativeTriangle,FineGrainedQuestions} from problem A with current upper bound $O(n^a)$ to problem B with current upper bound $O(n^b)$ is a Turing-reduction proving that if B is solvable in time $O(n^{b-\eps})$, for some $\eps>0$, then A is solvable in time $O(n^{a-\eps'})$, for some $\eps'>0$.
More formally, an $(a,b)$-fine-grained reduction from A to B is a (possibly randomized) algorithm solving A on instances of size $n$ using $t$ calls to an oracle for B on instances of sizes $n_1,\ldots,n_t$, such that for all $\eps>0$: $\sum_{i=1}^t (n_i)^{b-\eps} \leq  n^{a-\eps'}$ for some $\eps'>0$.
In this paper, unless otherwise stated, we assume that the reduction is randomized.
A $(3,3)$-fine-grained reduction is called a subcubic-reduction and two problems are called subcubic-equivalent if there are subcubic-reductions in both ways. Similarly, two problems are subquadratic-equivalent if there are $(2,2)$-fine-grained reductions between them in both ways.


\section{APSP to Median Parity} \label{sec:MedianParity}
In this section, we show a subcubic reduction from the {\sc Negative Weight Triangle} problem (hence also from APSP~\cite{WilliamsNegativeTriangle}) on a directed graph $G$ with integral edge weights in $[-M,M]$ to {\sc Median Parity}. 
We first describe the reduction of~\cite{AbboudEtAl} from {\sc Negative Weight Triangle} to {\sc Median} and then modify it to become a reduction to {\sc Median Parity}. 

\subsection{Negative Weight Triangle to Median~\cite{AbboudEtAl}}\label{sec:Negative Weight Triangle to Median}
The instance $G'$ to the {\sc Median} problem (illustrated in Figure~\ref{fig:MedianParityGraph}) is an undirected graph constructed as follows.
First, for any two (not necessarily different) vertices $u,v$ if there is no edge $(u,v)$ in $G$ then we add an edge $(u,v)$ of weight $w(u,v)=4M$ to $G$ (this will not form a new negative triangle). Each vertex $u$ of $G$ has five copies in $G'$ denoted $u_A,u_B,u_{B'},u_{C},u_{C'}$. Let $H$ be a sufficiently large number (say $H= 100M$). For any two (not necessarily different) vertices $u,v$ of $G$ we add the following edges to $G'$: $(u_A,v_B)$ of weight $3H+w(u,v)$, $(u_A,v_{B'})$ of weight $3H-w(u,v)$, $(u_A,v_C)$ of weight $6H-w(v,u)$\footnote{\label{VerticesOrder2}Notice the different order of the vertices.},  $(u_A,v_{C'})$ of weight $3H+w(v,u)$\cref{VerticesOrder2}, $(u_A,v_A)$ of weight $H$, and $(u_B,v_C)$ of weight $3H+w(u,v)$.

\begin{figure}[H]
    \centering
    \includegraphics[scale=0.3]{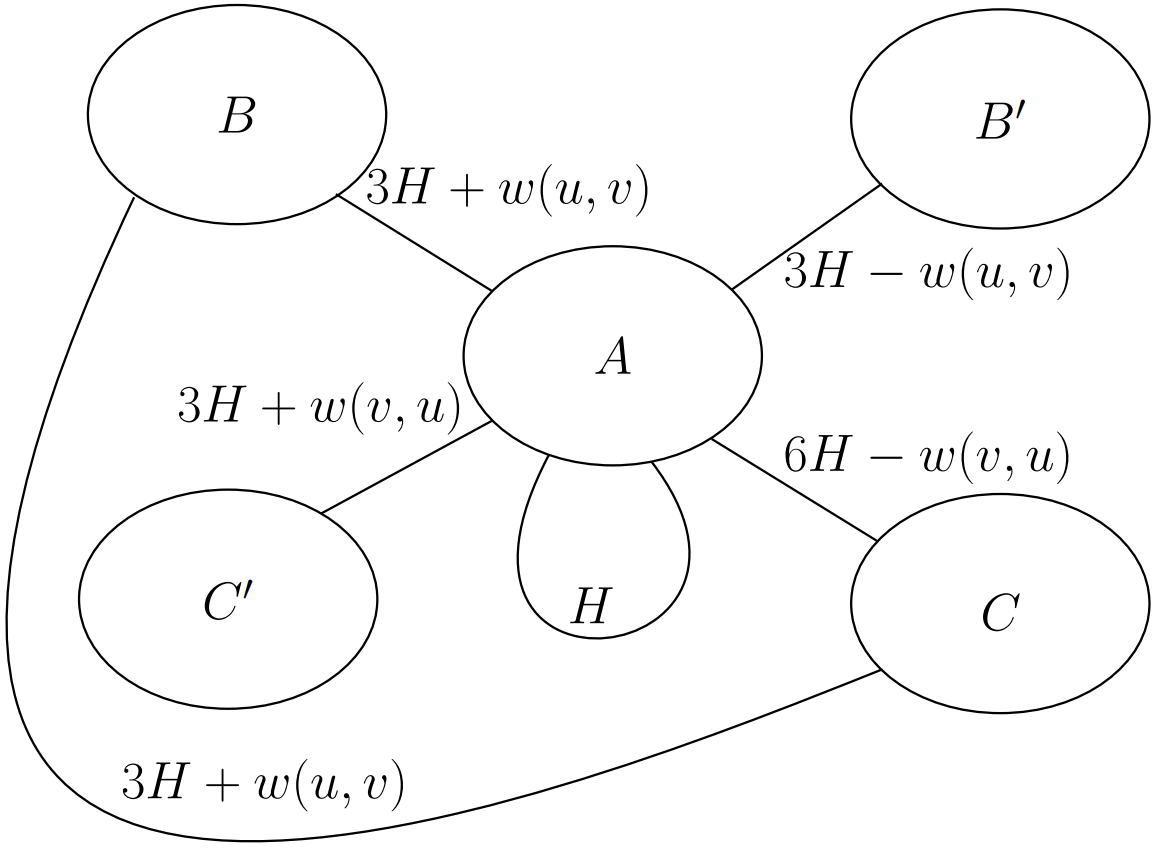}
    \caption{The graph $G'$ in the reduction from Negative Weight Triangle to Median.}
    \label{fig:MedianParityGraph}
\end{figure}

\begin{lemma}[\cite{AbboudEtAl}]
$G$ does not contain a negative triangle iff the median of $G'$ is $(16n-1)H$.
\end{lemma}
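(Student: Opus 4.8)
The plan is to understand the structure of shortest paths in $G'$ between copies of the vertices, and then argue that the median vertex must be some copy $u_A$. The graph $G'$ is essentially a layered gadget: the $A$-copies form a clique with uniform edge weight $H$, and from each $u_A$ there are edges into the $B$, $B'$, $C$, $C'$ layers whose weights are tuned so that a path $u_A \to v_B \to w_C$ (or $u_A \to v_{B'} \to w_{C'}$, etc.) has total weight $3H + w(u,v) + 3H + w(v,w)$, while the direct edge $u_A \to w_C$ has weight $6H - w(w,u)$. Hence the distance from $u_A$ to $w_C$ is $\min$ over the "two-hop via $B$" route and the "direct" route, and the $\min$ dips below $6H$ exactly when there is a $v$ with $w(u,v)+w(v,w)+w(w,u) < 0$, i.e. a negative triangle through the ordered pair $(u,w)$. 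So the first step is to carefully compute, for a fixed source $u_A$, the distances to all five copies of every other vertex, showing that each such distance is $3H$ or $6H$ (up to the triangle-detecting perturbation), that $d(u_A, v_A) = H$ for $v \neq u$, and that no shorter routes exist (this is where the choice $H = 100M$ matters — it forces shortest paths to use as few "$H$-scaled" edges as possible, and the additive $\pm w$ terms, bounded by $M$, cannot compensate for an extra hop). I would also handle the $B,B',C,C'$ copies as sources, but the point of the construction is that only the $A$-copies are viable median candidates: sources in the other layers have much larger distance sums because they sit "far" from everything.

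The second step is to sum these distances. For a fixed $u_A$, summing $d(u_A, \cdot)$ over all copies of all $n$ vertices gives a baseline that is independent of $u$ — something of the form $(16n - \text{const})H$ — plus a correction term that collects the perturbations $\min\{3H + \cdots, 6H - w(w,u)\}$ over all target vertices $w$. The construction is designed so that this correction is $0$ when no negative triangle passes through $u$, and is negative (by at least $1$, since weights are integers) when some negative triangle passes through $u$. Therefore $\sum_v d(u_A, v) = (16n-1)H$ for every $u$ if $G$ has no negative triangle, and is strictly smaller for at least one $u$ if $G$ does have one. Taking the minimum over all vertices (median), we get median $= (16n-1)H$ iff $G$ is negative-triangle-free, provided we have also checked that the non-$A$ sources never achieve a sum as small as $(16n-1)H$ — which follows from the same counting, since every distance from a non-$A$ source is at least $3H$ and there are enough targets at distance $\geq 6H$ to push the sum strictly above $(16n-1)H$.

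The main obstacle I anticipate is the bookkeeping in the distance computations: one must be scrupulous about the \emph{direction} of the weights (the footnoted reversal $w(v,u)$ versus $w(u,v)$ on the $C$ and $C'$ edges is exactly what makes the perturbation terms telescope into a triangle sum $w(u,v)+w(v,w)+w(w,u)$), about the added $4M$-weight edges for non-adjacent pairs (verifying they create no new negative triangle and do not accidentally shorten any relevant path), and about ruling out "clever" multi-hop shortcuts through several layers. The clean way to organize this is a single claim of the form "for all $u \neq w$ and all layers $X \in \{A,B,B',C,C'\}$, $d(u_A, w_X) = [\text{explicit value}]$", proved by exhibiting the candidate path and then using $H \gg M$ to show any path with an extra $H$-scaled edge is strictly longer. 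Once that claim is in hand, the summation and the final iff are routine arithmetic, and I would simply cite~\cite{AbboudEtAl} for the detailed verification rather than reproduce it.
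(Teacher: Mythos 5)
Your proposal correctly reconstructs the paper's argument: show the median must come from an $A$-copy by bounding the distance sums from non-$A$ copies by $(19n-5)H$, then compute that $\sum_v d(u_A,v) = (16n-1)H + \sum_{v} F(u,v)$ where $F(u,v) = \min\{0, \min_t\{w(v,u)+w(u,t)+w(t,v)\}\}$ is nonpositive and zero precisely when $(v,u)$ is not an edge of a negative triangle, with the $B/B'$ layers cancelling the $\pm w(u,v)$ contributions and the $C/C'$ layers jointly producing the $9H + F(u,v)$ term. One small inaccuracy: there are no $B'$--$C'$ (or $B'$--$C$) edges in the gadget — $B'$ and $C'$ are pendant layers attached only to $A$, and the only two-hop route is via $B$ — but this would surface immediately in the bookkeeping you defer to, and it does not affect the structure of your argument.
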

 \begin{proof}
Consider first a vertex $u_{X}$ with $X\neq  A$. We claim that the sum of distances $\sum_{v\in V(G')}{d_{G'}(u_X,v)}$ is at least $(19n-5)H$. To see this, first observe that the sum is minimized when $X=B$. This is because shortest paths from vertices in $B'$ and $C'$ go through $A$, and because every $C$-to-$A$ distance is larger than any $B$-to-$A$ distance by at least $H$. We therefore focus on $X=B$:  
The distance from $u_{B}$ to $u_{B}$ is zero and the distance from $u_{B}$ to $v_{B}$ (for $v\neq u$) is at least $5H$ (since $H$ is large enough, $3H+w(u,t)+3H+w(t,v)>5H$), so the sum of distances from $u_{B}$ to all vertices of $B$ is $5H(n-1)$. Similarly, for every vertex $v$ of $G$, the distances from $u_{B}$ to $v_{A},v_{B'},v_{C},v_{C'}$ are at least $2H,5H,2H,5H$ respectively. Overall, the sum of distances from $u_{B}$ is at least $(5n-5)H + 2nH + 5nH + 2nH + 5nH =(19n-5)H$.

Next consider a vertex $u_{A}$. Let $F(u,v)= \min\{0, \min_{t\in V(G)}\{w(v,u)+w(u,t)+w(t,v)\}\}$.
Observe that $F(u,v)= 0$ if the edge $(v,u)$ is not part of any negative triangle in $G$, and $F(u,v)< 0$ otherwise. We claim that the sum of distances $\sum_{v\in V(G')}{d_{G'}(u_{A},v)}$ is exactly $ (16n-1)H+ \sum_{v\in V(G)} F(u,v)$.
To see this, consider the distances from $u_{A}$. Distances to $v_{A},v_{B}$, and $v_{B'}$ are $H$ (for $v\neq u$), $3H+w(u,v)$, and $3H-w(u,v)$ respectively. Over all such vertices the sum of the distances is therefore $(n-1)H+ 6nH = (7n-1)H$. The distance to $v_{C'}$ is $3H+w(v,u)$ and the distance to $v_{C}$ is the minimum between $6H-w(v,u)$ (using a single edge) and $3H+w(u,t)+3H+w(t,v)$ for some vertex $t$ (using two edges, through some $t_{B}$). Summing those two distances together, we get $9H+w(v,u)+\min_{t\in V(G)} \{-w(v,u),w(u,t)+w(t,v)\}=9H+F(u,v)$. Overall, we get that $\sum_{v\in V(G')}{d_{G'}(u_{A},v)} = (16n-1)H + \sum_{v\in V(G)} F(u,v)$ as claimed. This implies that the median vertex must come from $A$ and that the median value is $(16n-1)H$ iff every $F(u,v)=0$ (i.e. $G$ does not contain a negative triangle).
 \end{proof}

\subsection{Negative Weight Triangle to Median Parity}\label{subsec:NegativeTriangleToMedianParity}
We now modify the above reduction so that it reduces to {\sc Median Parity} instead of {\sc Median}. We assume $n$ is odd (otherwise add an isolated vertex to $G$). Let $Med$ be the value of the median of $G'$. We multiply all the edge weights of $G'$ by $4n$ (notice that this multiplies the median value $Med$ by $4n$). We do this in order to make sure that small changes in edge weights would not change any shortest path, and also to make sure that subtracting $n$ from distance sums would not change the median vertex.

We show how to find the median of $G'$ using $O(\log n)$ executions of {\sc Median Parity}: Given a set of vertices $T\subseteq A$ (initialized to be $A$), pick an arbitrary subset $S$ of $T$ of half of its size. Temporarily (i.e restore weights at the end of the iteration) subtract $1$ from all the $S$-to-$B$ and $S$-to-$C$ edges and add $1$ to all the $S$-to-$B'$ edges. Now solve {\sc Median Parity} on $G'$. If the median value is odd, set $T\gets S$.  If the median value is even, set $T\gets T/S$. We continue recursively for $O(\log n)$ steps until $T$ contains a single vertex. We then check if this vertex participates in a negative triangle in $G$.

For the correctness of the above procedure, inductively assume that $T$ contains the median vertex of $G'$. Notice that the temporary changes to the edge weights do not change the identity of shortest paths in $G'$, only their value. In particular, the sum of distances from every vertex $u_{A}\in S$ decreases exactly by $n$, and for any vertex  $u_{A}\in A\backslash S$ the sum remains the same. To see this, consider first a vertex $u_{A}\in S$. The sum of its distances to any $v_{B}$ and $v_{B'}$ remains the same (one is larger by 1 and one is smaller by 1) and its distance to $v_{C}$ is decreased by $1$ (recall that the shortest path is either the direct edge $(u_{A},v_{C})$ or two edges $(u_{A},t_{B}),(t_{B},v_{C})$). Therefore, the sum of distances from $u_{A}\in S$ to all vertices of $G'$ decreases by exactly $n$. As for vertices in $u_A \in A \backslash S$, we do not change weights of edges in their shortest paths so their sum of distances is unchanged.

If the median is from $S$, then its sum of distances in $G'$ was originally $Med$. Since we multiplied  the edge weights by $4n$ and subtracted $n$ from its sum, the median value is now $4n\cdot Med-n$. This value is odd and indeed we set $T\gets S$.
If on the other hand the median is not from $S$, then the sum of distances from any vertex of $S$ was originally at least $Med+1$, and is therefore now at least $4n\cdot(Med+1)-n$. This value is strictly bigger than the value $4n\cdot Med$ of  the median. The value $4n\cdot Med$ is even and indeed we set $T\gets T/S$.

\section{Negative Triangle Vertex Parity} \label{sec:NTP}
In this section, we show that {\sc Negative Triangle Vertex Parity} (finding if the number of vertices that belong to a negative triangle is odd or even) is subcubic equivalent to {\sc APSP} under randomized reductions. We then use {\sc Negative Triangle Vertex Parity} in order to establish a subcubic equivalence with the Parity versions of {\sc Wiener Index}, {\sc Sum of Eccentricities}, and {\sc Integer Betweenness Centrality}.

\subsection{APSP to Negative Triangle Vertex Parity}\label{subsec:NTtoNTP} 
We show a probabilistic (one side error) reduction from {\sc Negative Weight Triangle} to {\sc Negative Triangle Vertex Parity} (NTVP). Given a directed instance of NWT $G$, we can turn it into a tripartite graph by adding vertices $v_1, v_2, v_3$ for every $v$ in $V(G)$, as well as adding the edges $(u_1, v_2),  (u_2, v_3),  (u_3, v_1)$ for every $(u,v)$ in $E(G)$. It holds that there is a negative triangle in $G$ iff there is a negative triangle in the tripartite graph. Furthermore one could replace the directed edges in the tripartite graph with undirected ones. Given an undirected NWT instance $G$, we create an undirected NTVP instance $G'$ as follows: Choose $V_{1}\subseteq V(G)$ uniformly, and let $\overline{V}=V(G)\backslash V_{1}$. For every $u_{1}\in V_{1}$ we add a vertex $u_{2}$, and let the union of all $u_{2}$ vertices be $V_{2}$. For every edge $(u_{1},v_{1})$ in $V_{1}\times V_{1}$ we add the edge $(u_{2},v_{2})$ and for every edge $(u_{1},v')$ in $V_{1}\times\overline{V}$ we add the edge $(u_{2},v')$.
Notice that the graph induced by $\overline{V}\cup V_{2}$ is $G$, and the same for $\overline{V}\cup V_{1}$. 

Since there are no edges between $V_{1}$ and $V_{2}$, every triangle is either in $\overline{V}\cup V_{1}$ or in $\overline{V}\cup V_{2}$. Furthermore, for every vertex $u_1\in V_1$, if $u_1$ belongs to a negative triangle in $G$ then both $u_1$ and $u_2$ belong to negative triangles in $G'$, thus contributing $2$ (even) to the parity NTVP($G'$) of the number of vertices that belong to a negative triangle in $G'$. Therefore, vertices in $V_{1}$ do not affect the parity NTVP($G'$). In other words, NTVP($G'$) is the parity of vertices in $\overline{V}$ with a negative triangle. 
If $G$ contains a negative triangle, then the probability of odd NTVP($G'$) is exactly $1/2$ (since each vertex with a negative triangle is chosen to be in $\overline{V}$ with probability $1/2$). If $G$ does not contain a negative triangle, then the probability of even NTVP($G'$) is exactly $1$. By repeating this process $O(\log n)$ times we can amplify the probability of success to $1- 1/n^{c}$ for any constant $c$. 

We remark that the above reduction can also be used to reduce {\sc Zero Weight Triangle} to its vertex parity version.

\subsection{Negative Triangle Vertex Parity to Wiener Index Parity (Directed)}\label{subsec:NTPtoWIP}
We handle the directed case here and the undirected case in Section~\ref{subsec:NTPtoWIPundirected}. Assume $n$ is even by adding a vertex with no negative triangles, if needed. The reduction graph $G'$ is constructed as in~\cite{AbboudEtAl,WilliamsNegativeTriangle} (see Figure~\ref{fig:ntptowip}): Each vertex $u$ of $G$ has five copies in $G'$ denoted $u_S,u_{A},u_{B},u_{C},u_{D}$. Let $H$ be a sufficiently large {\em even} number (say $H= 100M$). For every $(X,Y)\in \{(A,B),(B,C),(C,D)\}$ and $u,v\in V(G)$, add the edge $(u_{X}, v_{Y})$ with weight $2H+2w(u,v)$. For every $u\neq v\in V(G)$, add an edge $(u_{A}, v_{D})$ with weight $5H$. For every $u\in V(G)$, we add the edge $(u_{S},u_{A})$ with weight $H+1$ and the edge $(u_{S},u_{D})$ with weight $7H$. Turn $G'$ into a clique by replacing any missing edge with an edge of weight $16H$.
\begin{figure}[H]
    \centering
    \includegraphics[scale=0.35]{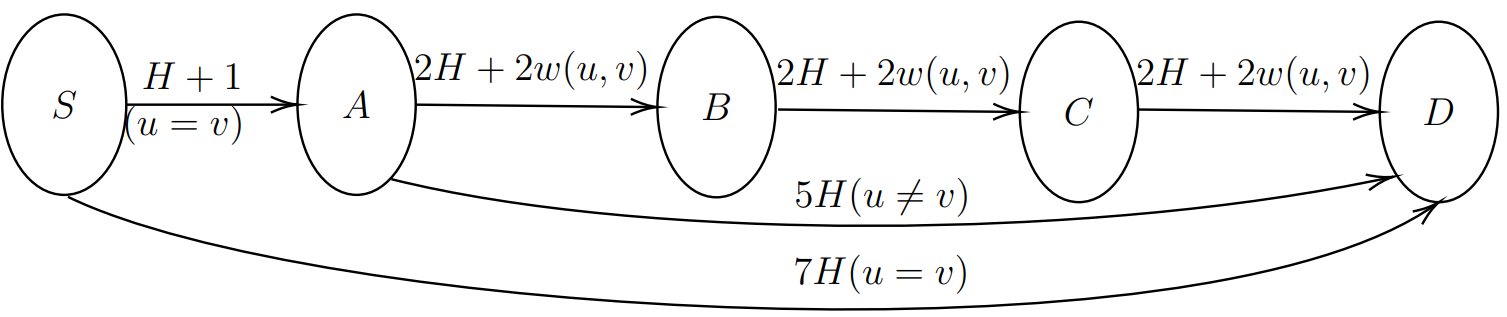}
    \caption{The graph $G'$ in the reduction from Negative Triangle Vertex Parity to Wiener Index Parity. Edges of weight $16H$ are absent.}
    \label{fig:ntptowip}
\end{figure}

We now show that the Negative Triangle Vertex Parity of $G$ (NTVP($G$)) is equal to the Wiener Index Parity of $G'$ (WIP($G'$)). Since $H$ is an even number, the only edges in $G'$ that have an odd length are the $(u_S,u_A)$ edges of length $H+1$. Therefore, the parity WIP($G'$) is determined by the $S$ to $A\cup B \cup C \cup D$ distances.

First observe that the sum of distances from $S$ to $A\cup B \cup C$ is even. This is because for any vertex $u_S$ in $S$ the following shortest paths consist of a single edge of weight $H+1$: the  $u_S$-to-$v_A$ (for $v=u$) path, the $u_S$-to-$v_B$ (for $v=u$ or $v\neq u$) paths, and the $u_S$-to-$v_C$ (for $v=u$ or $v\neq u$) paths. Thus, the total number of odd edges in the sum of distances from $S$ to $A\cup B \cup C$ is $n(2n+1)$, which is even since $n$ is even.

It remains to consider the distances from $S$ to $D$. 
For $u\neq v$, $d_{G'}(u_{S},v_{D})=6H+1$, and the sum of such distances is $n(n-1)(6H+1)$ (even).
We are left with the sum of distances $ d_{G'}(u_{S},u_{D})$. If $u$ belongs to a negative triangle in $G$ and $k$ is the minimal weight of such cycle then $d_{G'}(u_{S},u_{D})= 7H+2k+1$ (odd). If $u$ does not belong to any negative triangle then $d_{G'}(u_{S},u_{D})= 7H$ (even). Therefore the sum of distances is odd iff there is an odd number of vertices belonging to a negative triangle.

\subsection{Negative\! Triangle Vertex Parity to Wiener Index\! Parity\! (Undirected)}\label{subsec:NTPtoWIPundirected}
In undirected graphs, to avoid a trivial  Wiener Index Parity of 0, the Wiener Index is defined as the sum of $d(u,v)$ over every unordered (rather than ordered) pair $\{u,v\}$. 

We assume that every triangle has odd length by multiplying every edge-weight by $4$ and adding $1$ (this preserves the sign of negative and non-negative triangles). We construct a graph $G'$ similarly to~\cite{AbboudEtAl,WilliamsNegativeTriangle} and to Section~\ref{subsec:NTPtoWIP} but the approach differs in the analysis of correctness:
Each vertex $u$ of $G$ has four copies in $G'$ denoted $u_{A},u_{B},u_{C},u_{D}$. Let $H= 100M$ (sufficiently large {\em even} number). For every $(X,Y)\in \{(A,B),(B,C),(C,D)\}$ and $u,v\in V(G)$, add the edge $(u_{X}, v_{Y})$ of weight $2H+w(u,v)$. For every $u\neq v\in V(G)$, add an edge $(u_{A}, v_{D})$ of weight $5H$. For every $u=v\in V(G)$, add an edge $(u_{A}, u_{D})$ of weight $6H$.

Let $m$ be the number of edges in $G$ and let $W$ be the sum of edge weights in $G$. We claim that WIP($G'$) $- \ W$ is odd iff NTVP($G$) is odd: The sum of $A$-to-$B$ distances is $W+2H\cdot m$. This sum has the same parity as $W$, which we cancel out by subtracting $W$ from WIP($G'$). Notice that the sum of $B$-to-$C$ ($A$-to-$C$) distances and the sum of $C$-to-$D$ ($B$-to-$D$) distances are equal thus by adding both sums the parity of WIP($G'$) does not change. Similarly, the sum of the $X$-to-$X$ distances for every $X\in \{A,B,C,D\}$ is the same and WIP($G'$) is not changed. We are left with the $A$-to-$D$ distances. The sum of $u_{A}$-to-$v_{D}$ distances for $u\neq v$ is $5H\cdot n(n-1)$ (even). If $u$ is not in a negative triangle, then $d(u_{A},u_{D})=6H$ (even) by using the direct edge $(u_{A},u_{D})$. If $u$ is in a negative triangle, the $u_{A}$-to-$u_{D}$ distance is $6H$ plus the weight of the minimum weight triangle of $u$ (odd). Therefore WIP($G'$) $ - \ W$ is odd iff there is an odd number of vertices with a negative triangle. 

\subsection{Negative Triangle Vertex Parity to Sum of Eccentricities Parity}\label{subsec:NTPtoSoE}

The reduction is obtained by tweaking the reduction of Section~\ref{subsec:NTPtoWIPundirected}. We add to $G'$ an additional vertex $y$. For every $u\in V(G)$, we add the edge $(y,u_{D})$ of weight $7H$ and the edges $(y,u_{A}),(y,u_{B})$ and $(y,u_{C})$ each of weight $5H$.

Notice that these changes to $G'$ do not affect the distances between vertices of  $V(G')\backslash \{y\}$ since every path that goes through $y$ has weight of at least $10H$. Recall that $H$ is an even number.
The {\em eccentricity} of a vertex $u$ is defined as $\max_{v}d(u,v)$. The eccentricity of vertices in $B\cup C$ is $5H$ (even), since their distance to $y$ is $5H$ and their distance to any other vertex is bounded by $4H+2M$ (i.e. smaller than $5H$).
The eccentricity of vertices in $D$ is $7H$ (even), since their distance to $y$ is  $7H$ and their distance to any other vertex is bounded by $6H$ (maximized by a vertex in $A$).
The eccentricity of $y$ is $7H$ (even). Finally, the eccentricity of a vertex $u_{A}$ in $A$ is $d(u_{A},u_{D})$, since the $u_{A}$-to-$u_{D}$ distance is at least $6H-3M$ and any other distance is bounded by $5H$ (maximized by $y$ and some $v_{D}$). This means that, as shown in Section~\ref{subsec:NTPtoWIPundirected}, the parity of $\sum_{u}d(u_{A},u_{D})$ equals NTVP($G$).

\subsection{Negative Triangle Vertex Parity to Integer Betweenness Centrality Parity}\label{subsec:NTPtoIntegerBCParity}
The reduction is deterministic and uses a similar graph $G'$ to the one used in the reduction of~\cite{AbboudEtAl} from {\sc Negative Weight Triangle} to {\sc Betweenness Centrality}: Each vertex $u$ of $G$ has four copies in $G'$ denoted $u_{A},u_{B},u_{C},u_{D}$. Let $H= 100M$ (sufficiently large number). For every $(X,Y)\in \{(A,B),(B,C),(C,D)\}$ and $u,v\in V(G)$, add the edge $(u_{X}, v_{Y})$ with weight $2H+w(u,v)$. Add a single vertex $x$ and for every vertex $v\in V(G)$,   add the edges $(u_{A},x),(x,v_{D})$ with weight $3H$. Add two sets of vertices $Z,O$ each of size $ \lceil \log n \rceil $. Let $z_{i}\in Z, o_{i}\in O$ be the $i$'th vertex of the sets. If the $i$'th bit in $u$'s binary representation is $0$, add an edge $(u_{A},z_{i})$ with weight $2H$ and an edge $(o_{i},u_{D})$ with weight $3H$. Otherwise, add an edge $(u_{A},o_{i})$ with weight $2H$ and an edge $(z_{i},u_{D})$ with weight $3H$.      
This dependency on the binary representation assures that every $u_A$ and $v_D$ are connected with a path (of weight $5H$) through $O$ or through $Z$ except for the case where $u=v$. See Figure~\ref{fig:NTtoIBC}.

    \begin{figure}[htb]
        \centering
        \includegraphics[scale=0.3]{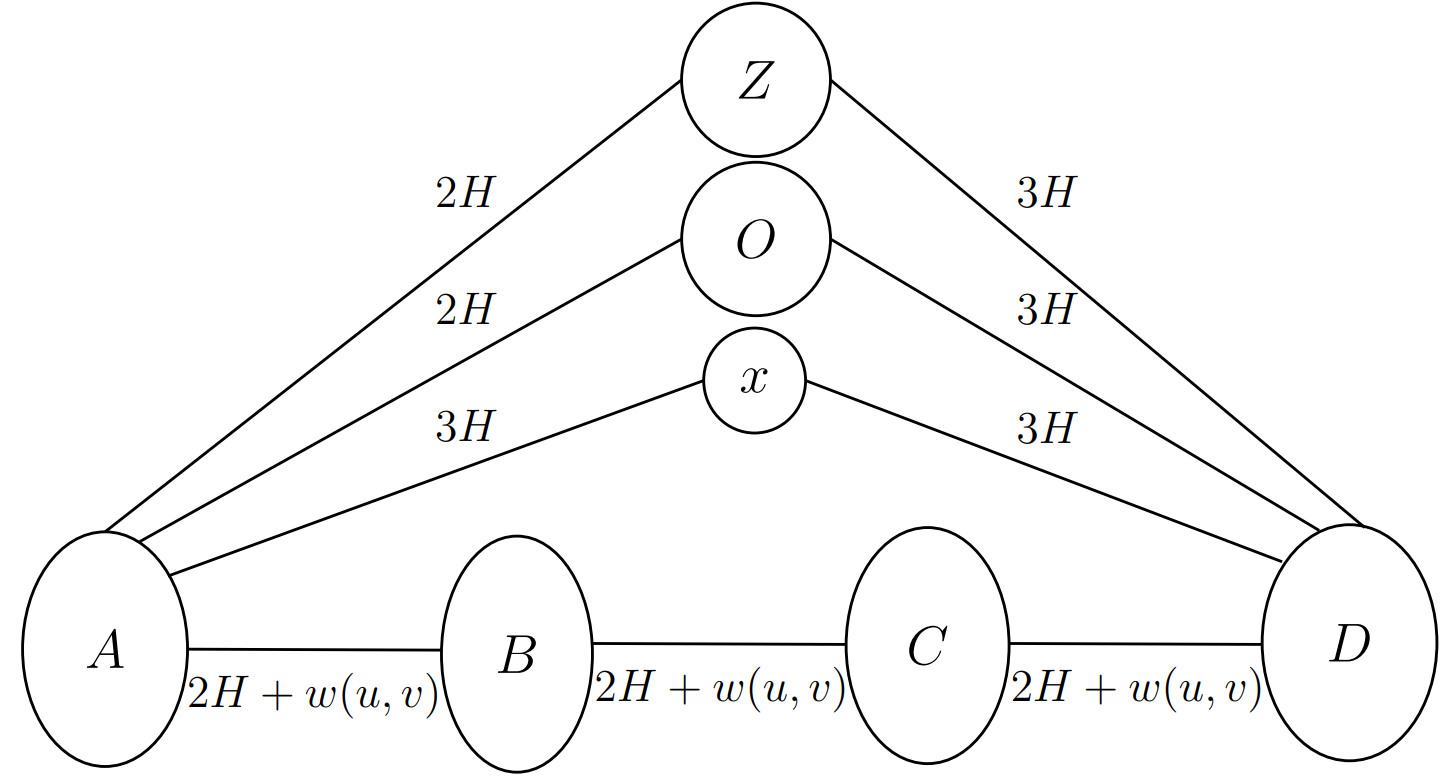}
        \caption{A representation of $G'$ in the reduction from Negative Weight Triangle to Integer Betweenness Centrality Parity}
        \label{fig:NTtoIBC}
    \end{figure}

Consider the Integer Betweenness Centrality Parity of the vertex $x$ in $G'$. 
Assume $n$ is even (otherwise add a vertex to $G$ with no negative triangle). Notice that the only pairs with a shortest path through $x$ can be of the form $(u_{A},u_{D})$ (pairs $(u_{A},v_{D})$ with $v\neq u$ have shorter paths of weight $5H$ through $Z$ or $O$). 
Furthermore, there is a shortest $u_{A}$-to-$u_{D}$ path through $x$ iff $u$ is not in a negative triangle. This is because the distance between $u_{A}$ and $u_{D}$ is the minimum between $6H$ (going through $x$) and $6H+w(u,v)+w(v,t)+w(t,u)$ for some $v,t \in V(G)$. Therefore, the number of pairs $(u_{A},u_{D})$ with shortest paths through $x$ is $n$ minus the number of vertices in a negative triangle, hence the parity of the number of paths going through $x$ in $G'$ is the same as the parity of the number of vertices in $G$ with a negative triangle.

\subsection{APSP to Integer Betweenness Centrality Parity}\label{subsec:APSPtoIntegerBCParity}
We provide a probabilistic (one sided error) reduction from {\sc Negative Weight Triangle} that does not go through {\sc Negative Triangle Vertex Parity}.
We continue from where we stopped in Section~\ref{subsec:NTPtoIntegerBCParity}. Recall that the number of pairs that have a shortest path through $x$ is $n$ minus the number of vertices in a negative triangle. If there is an odd number of pairs then we return that a negative triangle exists. Otherwise, there is an even number of vertices with a negative triangle. We then choose a set $S\subseteq V(G)$ uniformly, and limit $A,D$ to the vertices in $S$ ($B,C$ remain the same). If the number of paths going through $x$ is odd we report that there is a negative triangle, otherwise we report that there is none. If a negative triangle exists, $S$ has an odd number of vertices with a negative triangle with probability $1/2$, and we detect an odd number of pairs. Otherwise, $S$ always has $0$ (even) vertices with a negative triangle, and we succeed with probability $1$. We can repeat the process $O(\log n)$ times and amplify the probability of success to $1- 1/n^{c}$ for any constant $c$.

\section{APSP to Min-Plus Matrix Multiplication Parity}\label{sec:minplusmatrixmul}

\subsection{Min-Plus Multiplication to Min-Plus  Multiplication Parity}\label{subsec:minplusmatrixmultoparity}
Given two $n \times n$ matrices $A$ and $B$ we wish to compute $C=A\otimes B$ where $C[i,j]=\min_k \{A[i,k]+B[k,j]\}$. First assume that for every $i,j$ the value $C[i,j]$ is obtained by a unique index $k$. Let $K$ be the $n \times n$ matrix such that $K[i,j]$ is the unique index $k$ of $C[i,j]$. We show how to compute $K$ by using {\sc Min-Plus Matrix Multiplication Parity}.
 
Define $\hat{A}=2A$, and for any $t\in [\log n]$ define $k_{t}$ as the $t$'th bit of $k$ and $\hat{B_{t}}$ to be the matrix such that $\hat{B_{t}}[k,j]=2B[k,j]+k_{t}$. We compute the parity of $\hat{C_{t}}=\hat{A}\otimes \hat{B_{t}}$ for every $t\in [\log n]$. We claim that the parity of $\hat{C_{t}}[i,j]$ is the $t$'th bit of $K[i,j]$. This is because $\hat{C_{t}}[i,j]= \min_{k}\{2A[i,k]+2B[k,j]+k_{t}\}$. The parity of this value is $0$ if the unique index $k$ that minimizes $A[i,k]+B[k,j]$ has $k_t=0$ and is $1$ otherwise. Therefore, from this parity we can recover the $t$'th bit of $K[i,j]$.

To remove the assumption on the uniqueness of $k$, we define matrices $A'$ and $B'$ as
$A'[i,k]=(n+1) \cdot A[i,k] + k$ and $B'[k,j]=(n+1) \cdot B[k,j]$.
Observe that $A'$ and $B'$ have the uniqueness of $k$ property. This is because if  
$A'[i,k_{1}]+B'[k_{1},j]= A'[i,k_{2}]+B'[k_{2},j]$ for some $k_{1},k_{2}$ then 
$(n+1)\cdot (A[i,k_{1}]+B[k_{1},j]) +k_{1} = (n+1)\cdot (A[i,k_{2}]+B[k_{2},j]) +k_{2}$ and, since $k_{1}$ and $k_{2}$ are smaller than $n+1$, it follows that $k_{1}=k_{2}$. Furthermore, in order to compute $A\otimes B$ it suffices to compute $C'=A'\otimes B'$. Because  if $A'[i,k_{1}]+B'[k_{1},j]\le A'[i,k_{2}]+B'[k_{2},j]$ then (since $k_{1}$ and $k_{2}$ are smaller than $n+1$) $A[i,k_{1}]+B[k_{1},j]\le A[i,k_{2}]+B[k_{2},j]$.
 
As a corollary, we get that APSP is subcubic equivalent to APSP Parity (i.e. the problem of deciding the parity of every pairwise distance in the graph):
Let $M$ be a bound on the absolute values in $A$ and $B$. Create a graph consisting of vertices $a_{i},b_{i},c_{i}$ for every $i\in [n]$ and the edges $(a_{i},b_{j}), (b_{i},c_{j})$ with weights $A[i,j]+3M$ and $B[i,j]+3M$ respectively for every $i,j$. The distance $d(a_{i},c_{j})=6M+\min_{k} \{A[i,k]+B[k,j]\}$ and therefore has the same parity as $(A\otimes B)[i,j]$.

Notice that the reduction can be modified (with a folklore trick) to show that even computing $A\otimes A$ Parity is hard.
Let $D$ be the $n\times n$ matrix with every element equals to $3M$.
Let $E$ be the $2n\times 2n$ matrix $\begin{bmatrix}A & B \\ D & D\end{bmatrix}$.
Then $E\otimes E$ equals $\begin{bmatrix} X & Y\\ Z  & W \end{bmatrix}$ where   $Y=A\otimes B$ since $Y[i,j]=\min\{(A\otimes B)[i,j],(B\otimes D)[i,j]\}=(A\otimes B)[i,j]$.

\subsection{Min-Plus Convolution to Min-Plus Convolution Parity}\label{subsec:MinplusConvToParity}
Given vectors $A$ and $B$ each of length $n$, we wish to compute their convolution $C$ where $C[i]= \min_{i=j+k}\{ A[j]+B[k]\}$. The approach is the same as in Section~\ref{subsec:minplusmatrixmultoparity}. We assume each value $C[i]$ is obtained by a unique index $k$, otherwise we multiply $A$ and $B$ by $n+1$ and add to each $B[k]$ the value $k$ (as in Section~\ref{subsec:minplusmatrixmultoparity}). Let $K$ be the vector such that $K[i]$ is the unique index $k$ of $C[i]$. Define $\hat{A}=2A$, and for any $t\in [\log n]$ define $k_{t}$ is the $t$'th bit of $k$ and $\hat{B_{t}}$ to be the vector such that $\hat{B_{t}}[k]=2B[k]+k_{t}$. Let $\hat{C_{t}}[i]$ be the convolution of $\hat{A}$ and $\hat{B_{t}}$. Then the $t$'th bit of  $K[i]$ is the same as the parity of $\hat{C_{t}}[i]$. This is because $\hat{C_{t}}[i]=\min_{i=j+k}\{2A[j]+2B[k]+k_{t}\}$.

\subsection{Maximum Consecutive Subsums to Maximum Consecutive Subsums Parity}\label{subsec:MaxConsSumToParity}

Given a vector $X$ of length $n$, the maximum consecutive subsums problem asks to compute $\max_{i}\sum_{j=1}^{k} X[i+j]$ for every $k\in [n]$. To achieve this, we first compute (in linear time) the vector $A$ where $A[k]=\sum_{j=1}^{k} X[j]$. The problem then reduces to computing Diff$(A)$ where Diff$(A)[k]=\max_{i}\{A[k+i]-A[i]\}$. In fact, since $X[k]=A[k]-A[k-1]$, there is also a reduction in the opposite direction and so the two problems are equivalent (and their parity versions are equivalent).  
In this section, we show that given the parity of Diff$(A)$ (i.e. the parity of every element in Diff$(A)$) we can compute Diff$(A)$ itself.

Given a vector $A$, we wish to compute Diff$(A)$. We assume that for every $k$, the value Diff$(A)[k]=\max_{i}\{A[k+i]-A[i]\}$ is obtained by a unique index $i$. Otherwise, we multiply every $A[k]$ by $(n^{2}+1)$ and add $k^2$ (similarly to Section~\ref{subsec:minplusmatrixmultoparity}).
Let $I$ be the vector of such unique indices, and let $J$ be the vector where $J[k]= I[k]+k$. By definition, Diff$(A)[k]= A[J[k]]-A[I[k]]$.
We define ${A_{t}}$ to be the vector such that ${A_{t}}[k]= 4\cdot A[k]+k_{t}$ (where $k_{t}$ is the $t$'th bit of $k$). Notice that ${A_{t}}[j]-{A_{t}}[i]=4\cdot(A[j]-A[i])+(j_{t}-i_{t})$ where $(j_{t}-i_{t})\in \{-1,0,1\}$. Thus, for every $k$, Diff$({A})[k]$ is maximized when $j=J[k]$ and $i=I[k]$ (regardless of the values of $j_{t}$ and $i_{t}$). This is because for every $i\neq I[k]$ and $j=k+i$ it holds that $4\cdot(A[j]-A[i])+(j_{t}-i_{t})\leq 4\cdot(A[J[k]]-A[I[k]]-1)+1< 4\cdot(A[J[k]]-A[I[k]])+(J[k]_{t}-I[k]_{t})$.
Observe that that the parity of ${A_{t}}[j]-{A_{t}}[i]$ is $j_{t}\oplus i_{t}$, where $\oplus$ is the bitwise XOR operation.

For every $t\in [\log n]$ we compute the parity of Diff$({A_{t}})$. The computed parity of Diff$({A_{t}})[k]$ is $J[k]_{t}\oplus I[k]_{t}$. Given $J[k]_{1}\oplus I[k]_{1}, \ldots ,  J[k]_{\log n} \oplus I[k]_{\log n}$, we want to compute $J[k]$ and $I[k]$.
Recall that $J[k]=k+I[k]$. Let $c_{1},\ldots, c_{\log n}$ be the carry bits in the binary addition of $I[k]$ and $k$. We know that $I[k]_{t}\oplus k_{t}\oplus c_{t}=J[k]_{t}$ so by substituting $J[k]_{t}\oplus I[k]_{t}$ we compute every $c_{t}=J[k]_{t}\oplus I[k]_{t}\oplus k_{t}$. Given $c_{t},c_{t+1},k_{t},$  and $J[k]_{t}\oplus I[k]_{t}$ we wish to compute $J[k]_{t}$ and $I[k]_{t}$. However, this can only be done when $J[k]_{t}\oplus I[k]_{t}=1$. In this case $I[k]_{t}=\lnot J[k]_{t}= c_{t+1}$. This is because $k_{t}\oplus c_{t}=J[k]_{t}\oplus I[k]_{t}=1$ so $k_{t}+c_{t}=1$ and therefore $c_{t+1}=1$ iff $I[k]_{t}+k_{t}+c_{t}\geq 2$ iff $I[k]_{t}=1$.
We are left with the bits where $J[k]_{t}= I[k]_{t}$.
Let ${A}_{t,p}$ be the vector such that ${A}_{t,p}[k]=4\cdot A[k]+(k_{t}\rightarrow k_{p})$ (compared to ${A_{t}}$, we replace $k_{t}$ with $k_{t}$ implies $k_{p}$). For every $(t,p)\in [\log n]^2$ we compute the parity of Diff$({A}_{t,p})$. The parity of Diff$({A}_{t,p})[k]$ is $(J[k]_{t}\rightarrow J[k]_{p})\oplus (I[k]_{t}\rightarrow I[k]_{p})$ and is denoted as $b_{t,p}$. Given that $J[k]_{t}$ and $I[k]_{t}$ have not been computed yet, we know that $J[k]_{t}= I[k]_{t}$, hence for every $p$ it holds that $b_{t,p}=(I[k]_{t}\rightarrow J[k]_{p})\oplus (I[k]_{t}\rightarrow I[k]_{p})$. Notice that $J[k]>I[k]$ therefore there must be an index $p'$ where $I[k]_{p'}\neq J[k]_{p'}$ (which we previously found) thus $b_{t,p'}=(I[k]_{t}\rightarrow \lnot I[k]_{p'})\oplus (I[k]_{t}\rightarrow  I[k]_{p'})$. Observe that $I[k]_{t}=0$ iff $b_{t,p'}=0$. Overall, we find $I[k]$ for every $k$ using $O(\log^2 n)$ Diff parity computations and $\tilde{O}(n)$ reduction time.

\section{Zero Weight Triangle Counting to Negative Triangle Counting}

In this section we show a simple but surprising reduction from counting zero weight triangles to counting negative triangles.
We show a deterministic reduction from {\sc Zero Weight Triangle} to {\sc Negative Triangle Counting} and a randomized reduction from {\sc Zero Weight Triangle} to {\sc Negative Triangle Parity}.

\subsection{Zero Weight Triangle Counting (Parity) to Negative Triangle Counting (Parity)}\label{subsec:ZWTCountingParityToNegTriangleCountingParity}

We want to count the number of triangles with weight zero in a given graph $G$. Let $\Delta$ be the number of triangles in $G$. We can compute $\Delta$ in matrix-multiplication $O(n^\omega)$ time\footnote{We can also compute $\Delta$ with Negative Triangle Counting by changing every weight in $G$ to $-1$.}. Let $\Delta^{\!0}, \Delta^{\!+}, \Delta^{\!-}$ be the number of zero, positive, and negative weight triangles in $G'$ respectively. Given a subcubic algorithm for {\sc Negative Triangle Counting}, we can compute $\Delta^{\!-}$. By negating weights in $G$ we can compute $\Delta^{\!+}$ as well. Therefore, we can compute $\Delta^{\!0}= \Delta - \Delta^{\!+}- \Delta^{\!-}$. This simple reduction also reduces {\sc Zero Weight Triangle Parity} to {\sc Negative Triangle Parity}.

\subsection{Zero Weight Triangle to Zero Weight Triangle Parity }~\label{subsec:ZWTcountingParity}

Given a graph $G$, we want to find whether there is a zero weight triangle. We create a graph $G'$ as follows: For every vertex $u\in V(G)$, we create three copies $u_{A},u_{B},u_{C}$ in $G'$, and for every edge $(u,v)\in E(G)$ we add the edges $(u_{A},v_{B})$,$(u_{B},v_{C})$, $(u_{C},v_{A})$ to $G'$ (with the same weight as $(u,v)$). Notice that there is a zero weight triangle in $G$ iff there is a zero weight triangle in $G'$.
We now create a graph $G''$ by removing from $G'$ each edge $(u_{B},v_{C})$ with probability $\frac{1}{2}$, and removing from $G'$ each vertex $u_{A}$ with probability $\frac{1}{2}$. We report that a zero weight triangle exists in $G$ iff there is an odd number of zero weight triangles in $G''$. We now show that this  reduction works with probability at least $\frac{1}{4}$.

If there is no zero weight triangle in $G'$, we succeed with probability $1$.
If there is a zero weight triangle in $G'$, then let  $u_{A}$ be a vertex of $G'$ that participates in some zero weight triangle. Since we removed each edge $(v_{B},t_{C})$ with probability $\frac{1}{2}$ then, with probability $\frac{1}{2}$, the vertex $u_{A}$ participates in an odd number of zero weight triangles in $G''$. Let $\Delta^{\!0}_{v_{A}}$ be the number of zero weight triangles in $G''$ that $v_{A}$ participates in. The total number of zero weight triangles in $G''$ is $$\sum_{v\in V(G)}\!\!\! \Delta^{\!0}_{v_{A}} \ \  =  \sum_{v\in V(G)\backslash\{u\}}\!\!\!\!\!\! \Delta^{\!0}_{v_{A}}+\Delta^{\!0}_{u_{A}}.$$ If $\Delta^{\!0}_{u_{A}}$ is odd then, with probability $\frac{1}{2}$, our decision whether to remove $u_{A}$ leads to an odd number of zero weight triangles in $G''$. Overall, the probability of success is therefore at least $\frac{1}{4}$.

\section{Min-Plus Convolution to Knapsack Parity}\label{sec:minplusknapsackparity}

In the {\sc knapsack} problem, given a set of $n$ items $(w_{i},v_{i})$ and a target weight $t$, we wish to pick a multiset of items $I$ that maximizes $\sum_{i\in I} v_{i}$ subject to $\sum_{i\in I} w_{i} \leq t$. When $I$ is required to be a set (and not a multiset) the problem is called {\sc 0/1-knapsack}. In the {\sc Indexed Knapsack} problem, we have $w_{i}=i$ and $t=n$. Finally, the {\sc Coin Change} problem~\cite{DynamicProgramingEquivalences,UnweightedCoinProblem} is the same as {\sc Indexed Knapsack} but with the additional restriction $\sum_{i\in I} i= n$.

The {\sc Knapsack} and the {\sc 0/1-knapsack} problems are equivalent to Min-plus Convolution under randomized reductions~\cite{MinPlusConvEquivalences}.
The {\sc Indexed Knapsack} and the {\sc Coin Change} problem are equivalent to Min-plus Convolution under deterministic reductions~\cite{DynamicProgramingEquivalences}.  
In this section, we show that the parity versions of all the above problems are equivalent to Min-plus Convolution.

\subsection{Super-Additivity Testing to  Knapsack~\cite{MinPlusConvEquivalences}}
 Given a vector $A[0],\ldots,A[n-1]$, the {\sc Super-Additivity testing} problem asks whether $A[i]+A[j]\leq A[i+j]$ for every $i,j$. The problem is subquadratic equivalent to Min-plus Convolution under deterministic reductions~\cite{MinPlusConvEquivalences}. We now give a brief description of the reduction in~\cite{MinPlusConvEquivalences} from {\sc Super-Additivity testing} to {\sc  Knapsack}.
 
First, it is shown in~\cite{MinPlusConvEquivalences} that we can assume without loss of generality that  $0=A[0] < A[1] < \cdots < A[n-1]=M$. Let $D=Mn+1$, the instance of {\sc Knapsack} consists of two types of items: Type-$A$ items are $(i,A[i])$  and Type-$B$ items are $(2n-1-i,D-A[i])$. It remains to show that, when setting $t=2n-1$, the optimal sum of values $\sum_{i\in I} v_{i}$ equals $D$ iff $A$ is super-additive.
 Since $D>\sum_{i}A[i]$, the optimal solution must take at least one Type-$B$ item, and it cannot take more than one because the weight would exceeds $t$. If $A$ is not super-additive, then for some $i,j$ it holds that $A[i]+A[j]> A[i+j]$ and therefore the three items $\{(i,A[i]), (j,A[j]), (2n-1-i-j,D-A[i+j])\}$ constitute a valid solution whose value is larger than $D$. If $A$ is super-additive, then every two Type-$A$ items $(i,A[i]), (j,A[j])$ can be replaced by $(i+j,A[i+j])$ without changing the total weight. Thus, for any $i$, the solution $\{(i,A[i]), (2n-1-i,D-A[i])\}$ is optimal and its value is exactly $D$.
 
 \subsection{Super-Additivity Testing to Knapsack Parity}
 We now modify the above the reduction to obtain a reduction to {\sc Knapsack Parity}. We first remove the item $(0,A[0])$ (since $A[0]=0$ it does not contribute any value). We then replace every Type-$A$ item $(i,A[i])$ by $(i,2A[i])$, every Type-$B$ item $(2n-1-i,D-A[i])$ (with $i\neq 1$) by $(2n-1-i,2(D-A[i]))$, and the Type-$B$ item $(2n-1-1,D-A[1])$ by $(2n-1-1,2(D-A[1])+1)$. We show that $A$ is super-additive iff the value of the optimal solution is odd. 
 
 Once again, every optimal solution must consist of exactly one Type-$B$ item since if there are no Type-$B$ items then the value does not exceed $2D$ as $2D> \sum_{i}2A[i]$, and with more than one Type-$B$ items the weight exceeds $t=2n-1$. If $A$ is not super-additive, then there are $i,j$ such that $k=i+j\geq 2$ and $A[i]+A[j]>A[k]$ therefore the items $\{(i,2A[i]), (j,2A[j]), (2n-1-i-j,2D-2A[k])\}$ constitute a valid solution whose value is larger than $2D+1$. Notice that $k\ge 2$ since if $k=0$ then the total value is $2D$ (thus not optimal), and if $k=1$ then we include the item $(2n-1-1,2(D-A[1])+1)$ and since $t=2n-1$ we can only add the item $(1,2A[1])$ leading to a non-optimal solution with value $2D+1$. Therefore, the optimal solution does not use the item $(2n-1-1,2(D-A[1])+1)$ and hence it has an even value.
 On the other hand, if $A$ is super-additive, then the solution $\{(1,2A[1]),(2n-1-1,2(D-A[1])+1)\}$ is optimal and has value exactly $2D+1$ (odd). This is because among the solutions that include a Type-$B$ item $(2n-1-i-j,2D-2A[k])$ with $k\neq 1$, once again by super-additivity, $\{(k,2A[k]),(2n-1-k,2D-2A[k])\}$ has the maximal value of $2D$ (i.e. smaller than $2D+1$). 
  
 \subsection{Super-Additivity Testing to 0/1-knapsack Parity}
 We now show how to modify the above reductions to be reductions to {\sc 0/1-knapsack} and {\sc 0/1-knapsack Parity}. In the above reductions, when $A$ is super-additive the optimal solution does not use any item more than once, and its total value $V$ is either $D$ or $2D+1$. When $A$ is not super-additive, there is a solution with a higher value than $V$. There is only one case where this solution may use the same item more than once. This happens when $A$ is not super-additive in the following way: $A[i]+A[j]\leq A[i+j]$ for every $i\neq j$ but $A[i]+A[j]> A[i+j]$ for some $i=j$. Therefore, in $O(n)$ time we can check for every $i$ whether $2A[i]\leq A[2i]$ and only if the answer is yes we apply the reduction.

Note that the above reductions also apply to the {\sc Indexed Knapsack Parity} problem. This is because the target weight $t$ equals the total number of items $2n-1$, and each item has a unique weight in $[2n-1]$. The reductions also apply to {\sc Coin Change Parity}: When $A$ is super-additive, the optimal solution for {\sc Coin Change} (which is also an optimal solution for {\sc Knapsack}) has weight $2n-1$ (equal to the number of items) and an odd value ($2D+1$). When $A$ is not super-additive, the optimal solution for {\sc Coin Change} (which is possibly not an optimal solution for {\sc Knapsack}) has weight $2n-1$ (equal to the number of items) and an even value (larger than $2D+1$).

\section{APSP to Sum of Eccentricities} \label{sec:sumofeccentricities}

In this section, we show a subcubic reduction from {\sc Radius} (hence also from APSP) on a graph $G$ to {\sc Sum of Eccentricities} on a graph $G'$. 
Let $R$ be the radius of  $G$. In order to compute $R$ it suffices to find whether $R\geq k$ for any given $k \in [Mn]$ (since then we can binary search for $R$). The constructed graph $G'$ is similar to the one in the reduction of~\cite{AbboudEtAl} from {\sc Diameter} to {\sc Positive Betweeness centrality}: We create $G'$ by multiplying the edge weights of $G$ by 2 and then adding a vertex $x$ and the edges $(x,u)$ and $(u,x)$ each of weight $k$ for every $u\in V(G)$. 
\begin{lemma}
 $R\geq k$ iff the sum of eccentricities of $G'$ is $\sum_{u}\max_{v} {d_{G'}(u,v)}=2kn+k$. 
\end{lemma}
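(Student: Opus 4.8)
The plan is to analyze the eccentricity of each vertex in $G'$ separately, splitting into the newly added vertex $x$ and the ``old'' vertices $u \in V(G)$. The construction multiplies all original edge weights by $2$, so every distance internal to $G$ becomes even; the only odd contributions can come through paths that use the new edges of weight $k$. First I would observe that for any two old vertices $u,v$, the distance $d_{G'}(u,v)$ is the minimum of $2\cdot d_G(u,v)$ (staying inside $G$) and $2k$ (going $u \to x \to v$). Hence $d_{G'}(u,v) = \min\{2 d_G(u,v), 2k\} \le 2k$. Similarly $d_{G'}(x,u) = d_{G'}(u,x) = k$ for every $u$, since a direct edge of weight $k$ exists and no shorter path can appear (any detour through $G$ costs at least $k$ to leave and $k$ to return). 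Therefore the eccentricity of $x$ is exactly $k$.

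**Computing the eccentricities of old vertices.**

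Next I would compute $\mathrm{ecc}_{G'}(u) = \max_{v \in V(G')} d_{G'}(u,v)$ for $u \in V(G)$. The candidates are $d_{G'}(u,x) = k$ and $\max_{v \in V(G)} d_{G'}(u,v) = \max_v \min\{2 d_G(u,v), 2k\}$. The key case distinction is whether $u$ has eccentricity at least $k$ in $G$, i.e. whether there exists $v$ with $d_G(u,v) \ge k$. If $\mathrm{ecc}_G(u) \ge k$, then $\max_v \min\{2 d_G(u,v), 2k\} = 2k$, so $\mathrm{ecc}_{G'}(u) = \max\{k, 2k\} = 2k$. If $\mathrm{ecc}_G(u) < k$, i.e. $\mathrm{ecc}_G(u) \le k-1$, then every old-to-old distance is $2 d_G(u,v) \le 2k-2 < 2k$, and also $\le 2k-2 < k$ fails in general — but since $k \ge 1$ we have $2 d_G(u,v)$ could be anywhere in $[0, 2k-2]$, so $\mathrm{ecc}_{G'}(u) = \max\{k, 2\,\mathrm{ecc}_G(u)\}$, which is at most $2k-2$ and at least $k$. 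In all cases $\mathrm{ecc}_{G'}(u) \le 2k$, with equality iff $\mathrm{ecc}_G(u) \ge k$.

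**Assembling the sum and finishing.**

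Now I would sum over all $n+1$ vertices of $G'$. We get $\sum_{u \in V(G')} \mathrm{ecc}_{G'}(u) = k + \sum_{u \in V(G)} \mathrm{ecc}_{G'}(u)$. Each term $\mathrm{ecc}_{G'}(u)$ is at most $2k$, so the whole sum is at most $k + 2kn = 2kn + k$, with equality iff $\mathrm{ecc}_{G'}(u) = 2k$ for every $u \in V(G)$, i.e. iff every $u$ satisfies $\mathrm{ecc}_G(u) \ge k$. But $\min_u \mathrm{ecc}_G(u) = R$ is exactly the radius of $G$, so ``every $u$ has $\mathrm{ecc}_G(u) \ge k$'' is equivalent to $R \ge k$. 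This gives the claimed equivalence: $R \ge k$ iff the sum of eccentricities of $G'$ equals $2kn + k$. The main obstacle — and the one point that needs care — is verifying the ``$<$'' side cleanly: when $R < k$ there is some vertex $u$ with $\mathrm{ecc}_G(u) \le k-1$, and I must confirm its eccentricity in $G'$ is strictly less than $2k$ (namely $\max\{k, 2\,\mathrm{ecc}_G(u)\} \le 2k-2$), so that the total sum is strictly below $2kn+k$; this relies on the fact that multiplying weights by $2$ keeps the gap an even amount away from the threshold, exactly the same trick already used in Section~\ref{subsec:NegativeTriangleToMedianParity}. One should also double-check the degenerate directed-vs-undirected issue, but since edges $(x,u)$ and $(u,x)$ are both added with weight $k$, the argument is symmetric and goes through in both settings.
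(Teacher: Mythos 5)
Your proof is correct and follows essentially the same route as the paper's: compute $\operatorname{ecc}_{G'}(x)=k$, show each old vertex $u$ has $\operatorname{ecc}_{G'}(u)=\min\{2\operatorname{ecc}_G(u),2k\}\le 2k$ with equality iff $\operatorname{ecc}_G(u)\ge k$, and conclude the sum attains $2kn+k$ iff every $u$ has $\operatorname{ecc}_G(u)\ge k$, i.e. $R\ge k$. One tiny slip: your claim that $\operatorname{ecc}_{G'}(u)\le 2k-2$ when $\operatorname{ecc}_G(u)\le k-1$ is false for $k=1$ (there $\operatorname{ecc}_{G'}(u)=k=1>2k-2=0$), but the only inequality actually used is $\operatorname{ecc}_{G'}(u)<2k$, which still holds, so the argument is unaffected.
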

\begin{proof}
This is the same as claiming that $R\geq k$ iff $\sum_{u\neq x}\max_{v} {d_{G'}(u,v)}=2kn$.
If $R\geq k$, then every vertex $u\neq x$ can use $x$ to get to its furthest vertex with a path of length $2k\leq 2R$. Observe that any other path would be of length at least $2R$ (because we have multiplied all edge weights by 2). Therefore, $\sum_{u\neq x}\max_{v} {d_{G'}(u,v)}=2kn$.
If on the other hand $R<k$, then the distance in $G'$ from the radius vertex of $G$ to any other vertex is at most $\max\{2R,k\}<2k$ so this vertex adds less than $2k$ to the sum. All the other vertices add at most $2k$ to the sum, and thus $\sum_{u\neq x}\max_{v} {d_{G'}(u,v)} \neq 2kn$.
\end{proof}

\section{Radius to Radius Parity and  Diameter to Diameter Parity }\label{sec:RadiusAndDiameterParity}
In this section, we show that computing the Radius $R$ (resp. Diameter $D$) of a graph $G$ subcubicaly reduces to computing the parity of $R$ (resp.  $D$). As usual, to compute $R,D$ it  suffices to find whether $R,D\geq k'$ for  $k' \in [Mn]$. Let $k'=(k+1)/2$ for some odd $k\geq 1$. We create a reduction graph $G'$ similarly to~\cite{AbboudEtAl} and to Section~\ref{sec:sumofeccentricities}: We multiply the edge weights of $G$ by $2$ and add a vertex $x$ with $(v,x)$,$(x,v)$ edges of weight $k$ for every $v\in V(G)$. In the case of Radius, we add an additional vertex $y$ with $(v,y)$,$(y,v)$ edges of weight $k$ for every $v\in V(G)$.

Consider first the diameter of $G'$. If $x$ is an endpoint of the diameter of $G'$ then the diameter value is $k$. Otherwise, the diameter value is either $2D$ (by taking the same path as in $G$) or $2k$ (by using a path through $x$). Therefore the diameter of $G'$ has value $\max\{k,\min\{2D,2k\}\}$. 
If $2D\leq k$ (i.e. $D<k'$), then the diameter is $k$ (odd). Otherwise $2D \geq k+1$ (i.e. $D \geq k'$), and the diameter is either $2D$ or $2k$ (even in both cases).

Consider next the radius of $G'$. The radius of $G'$ is the minimum between $2k$ (if $x$ and $y$ are the endpoints) and $\max\{k,\min\{2k,2R\}\}$ (otherwise). If $2R\leq k$, the latter term equals $k$ and the radius of $G'$ is $k$ (odd). Otherwise, $k<2R$ and the radius of $G'$ is either $2k$ or $2R$ (even in both cases). 

\section{Diameter to Reach Centrality Parity}\label{sec: ReachCentralityParity}
Assume the diameter $D$ is even by multiplying the edge weights by $2$. 
We want to be able to answer whether $D\geq k$ for an {\em even} $k$, because then we can find $D$ by binary search over $[Mn]$ (although $k$ is even, we can find $D$ since it is even as well).
The original reduction of~\cite{AbboudEtAl} from {\sc Diameter} to {\sc Reach Centrality} (RC) uses the same graph $G'$ from Section~\ref{sec:sumofeccentricities}. I.e. $G'$ is obtained by (again) multiplying the edge weights of $G$ by 2 and then adding a vertex $x$ and the edges $(x,u)$ and $(u,x)$ each of weight $k$ for every $u\in V(G)$.
If $D\geq k$, then RC($x$)$=k$ (even) since the path through $x$ (of weight $2k$) is not longer than the shortest path within $G$ (of weight $2D$). If $D<k$, there is no shortest path going through $x$ and thus RC($x$) is $0$ (also even). To avoid this, we add a vertex $y$ with the bidirectional edge $(x,y)$ of weight $1$. Now, if  $D\geq k$ then we still have RC($x$)$=k$ (even) but when $D<k$ we now have RC($x$)$=1$ (odd) because the only shortest paths through $x$ consist of two edges $(u,x)$ of weight $k$ and $(x,y)$ of weight 1.

\section{Minimum Weight Triangle Parity}
The {\sc Minimum Weight Triangle Parity} problem asks for the parity of the weight of the minimum weight triangle. In Section~\ref{subsec:APSPtoMinimumWeightTriangleParity} we show that this problem is subcubic equivalent to APSP. We then use this equivalence in Sections~\ref{subsec:MaxSubParity} and~\ref{subsec:ReplacementPath} to reduce APSP to the parity versions of {\sc Maximum Subarray}, {\sc Replacement Paths}, and {\sc Second shortest path}.

\subsection{APSP to Minimum Weight Triangle Parity}\label{subsec:APSPtoMinimumWeightTriangleParity}
The reduction is from {\sc Negative Weight Triangle}. Given an instance $G$ of {\sc Negative Weight Triangle}, we first multiply the edge  weights by $4$ and then add $1$ to each edge. Then, we add a triangle of weight $0$. The resulting graph is $G'$. If a negative triangle exists in $G$, then the minimum weight triangle of $G'$ has an odd weight. If a negative triangle does not exist in $G$, then the minimum weight triangle in $G'$ has an even weight (zero).

We can modify the above reduction to obtain a property that will be useful later on. Instead of adding an arbitrary zero weight triangle, for every vertex $u$ in $G$ we add two copies $u_{1},u_{2}$ and add the bidirectional edges $(u,u_{1}),(u,u_{2}),(u_{1},u_{2})$ with weight $0$. Notice that for every vertex $v\in V(G')$, the parity of the minimum weight triangle that $v$ participates in is even (weight $0$) if $v$ is not in a negative triangle or $v \notin V(G)$, and is odd  if $v\in V(G)$ and $v$ is in a negative
triangle. We now have the property that there is no negative triangle in $G$ iff the minimum weight triangle of {\em every} vertex of $G'$ is even (and not only the minimum weight triangle of $G'$). 
 
\subsection{Minimum Weight Triangle Parity to Maximum Subarray Parity}\label{subsec:MaxSubParity}
We use the existing reduction of~\cite{MaxSubArray} from {\sc Minimum Weight Triangle} to {\sc Maximum Subarray}. 
Their reduction creates an instance of {\sc Maximum Subarray} with a weight of $110M$ minus the weight of the minimum weight triangle. We now ask for the parity of the maximum subarray, which is the same as the parity of the minimum weight triangle.

\subsection{Minimum Weight Triangle Parity to Replacement Paths Parity and to Second Shortest Path Parity}\label{subsec:ReplacementPath}
We use the reduction of~\cite{WilliamsNegativeTriangle} from {\sc Negative Weight Triangle} on a graph $G$ to  {\sc Replacement Paths} on a graph $G'$. We assume $M$ is even ($M$ only serves as an upper bound on the edge weights). Given a shortest path $P$ and an edge $e=(u,v)$ on $P$, a {\em detour} of $e$ is  a $u$-to-$v$ path that is internally disjoint from $P$. Let $v_{1}, \dots, v_{n}$ be  the vertices of $G$. The reduction of~\cite{WilliamsNegativeTriangle} constructs a graph $G'$ that includes a shortest path $p_{0}$-$p_1$-$\cdots$-$p_{n}$ such that the replacement path of $e_{i}=(p_{i-1},p_{i})$ (i.e. the shortest $p_{0}$-to-$p_{n}$ path in $G'$ that avoids $e_{i}$) has the following properties: (1) it is composed of the prefix $p_{0}$-$p_1$-$\cdots$-$p_{i-1}$, the minimum weight detour of $e_{i}$ in $G'$, and the suffix $p_{i}$-$p_{i+1}$-$\cdots$-$p_{n}$, (2) its weight is $Mn$ plus the weight of the minimum weight triangle of $v_{i}$ in $G$. Notice that the weight of the prefix $d(p_{0},p_{i-1})$ and suffix $d(p_{i},p_{n})$ is known so the replacement paths parity provides the parity of the minimum weight detour of every $e_i$. Since $Mn$ is even, the parity of the minimum weight detour of $e_{i}$ is the same as the parity of the minimum weight triangle of $v_{i}$. This completes the reduction because, by the property achieved in Section~\ref{subsec:APSPtoMinimumWeightTriangleParity}, it suffices to find if there is a vertex $v_{i}$ with an odd minimum weight triangle. 

A similar reduction works for the {\sc Second Shortest Path} problem, since (as shown in~\cite{WilliamsNegativeTriangle}) the weight of the second shortest $p_{0}$-to-$p_{n}$ path in $G'$ is $Mn$ plus the weight of the minimum weight triangle in $G$. Thus, the parity of the second shortest path is the same as the parity of the minimum weight triangle.

\section{Co-Negative Triangle to Max Row Sum Parity}\label{sec:maxrowsum}
\vspace{-3mm}

In this section, given an instance $G$ to the {\sc Co-Negative Triangle} problem, we show a subcubic reduction that generates an instance $G'$ to the {\sc Max Row Sum Parity} problem. The graph $G'$ is similar to the one in Section~\ref{sec:MedianParity} except that: (1) it is now a directed graph (directed as in Figure~\ref{fig:MaxRowSumFig}), and (2) it includes the additional edges $(u_{B'},v_{C})$ of weight $4H$, $(u_{C'},v_{C})$ of weight $4H$, all other edges (including edges between vertices in the same set) have weight $H$.

\begin{figure}[htb]
    \centering
    \includegraphics[scale=0.38]{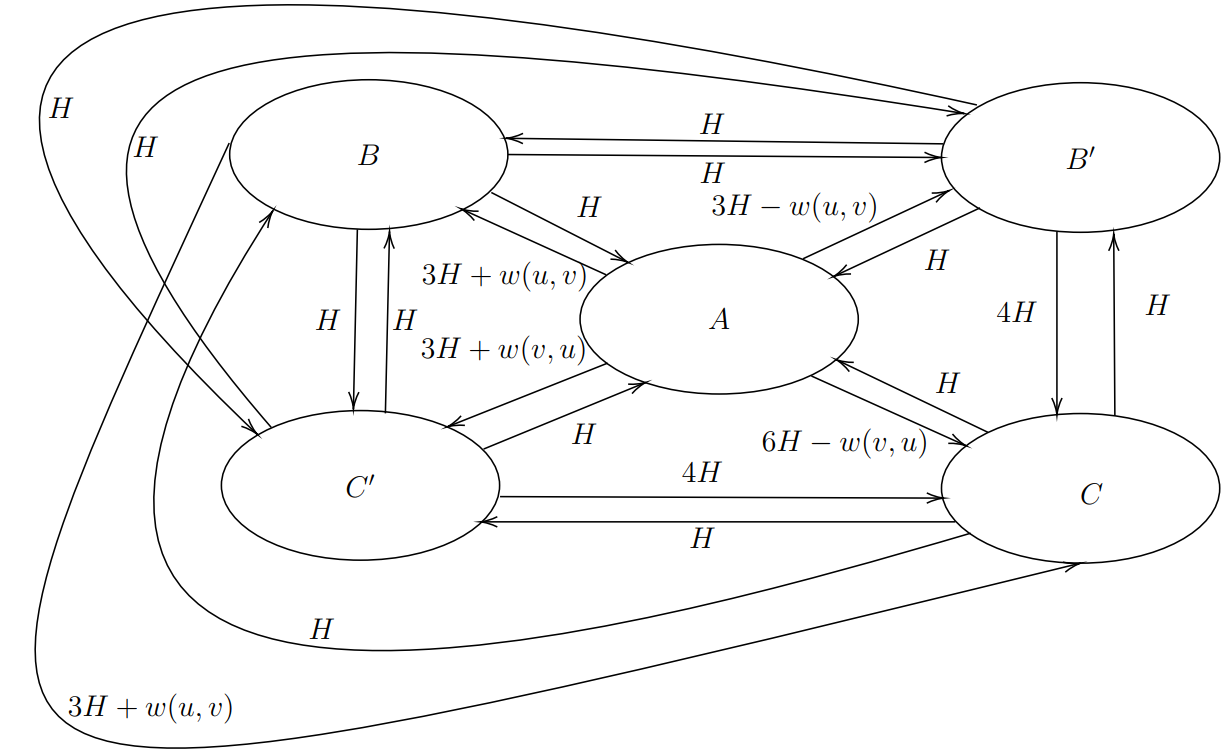}
    \caption{The graph $G'$ in the {\sc Co-Negative triangle} to {\sc Max Row Sum} reduction. Every two vertices in the same set are connected with bidirectional edges of weight $H$.}
    \label{fig:MaxRowSumFig}
\end{figure}

\begin{lemma}
There exists a vertex in $G$ that does not belong to any  negative triangle iff the max row sum of $G'$ equals exactly $(16n-1)H$.
\end{lemma}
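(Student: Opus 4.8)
The plan is to recycle the distance analysis of the Median lemma (Section~\ref{sec:MedianParity}) for the $A$-copies of $G'$, and to use the newly added weight-$H$ and weight-$4H$ edges to pin the row sums of all the other copies well below $(16n-1)H$; then the maximum row sum is $(16n-1)H$ precisely when some $A$-copy attains it.

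\emph{Step 1 (the $A$-copies).} For a vertex $u$ of $G$ put $F(u,v)=\min\{0,\min_{t\in V(G)}\{w(v,u)+w(u,t)+w(t,v)\}\}$ as in the Median reduction, so that $F(u,v)\le 0$ always and $F(u,v)<0$ iff the edge $(v,u)$ lies on a negative triangle of $G$. I would first show that $\sum_{v\in V(G')}d_{G'}(u_A,v)=(16n-1)H+\sum_{v\in V(G)}F(u,v)$, i.e.\ exactly the value computed in the Median lemma. The only point to check is that the new edges do not create a shorter path out of $u_A$ than the ones used there. Every edge leaving an $A$-copy has weight at least $H$ (the edge $(u_A,v_A)$ has weight $H$; the edges to $B$-, $B'$- and $C'$-copies have weight $\ge 2H$; the edge to a $C$-copy has weight $\ge 5H$), and, using $H=100M$, one checks that (i) the weight-$H$ within-set edges and the weight-$H$ edges from $B,B',C,C'$ copies back into $A$-copies cannot beat the direct edges $(u_A,v_A),(u_A,v_B),(u_A,v_{B'}),(u_A,v_{C'})$, and (ii) the two new $4H$-edges into $C$, and any detour through a weight-$H$ edge, yield $u_A$-to-$v_C$ paths of weight $\ge 7H-M>6H+M\ge d_{G'}(u_A,v_C)$. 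Hence the distances from $u_A$ to the $A,B,B',C,C'$ copies are unchanged, and the same summation as in the Median lemma gives the formula. Since $\sum_v F(u,v)=0$ iff no in-edge of $u$ lies on a negative triangle, i.e.\ iff $u$ is a co-negative vertex of $G$, the row sum of $u_A$ is $\le(16n-1)H$, with equality iff $u$ is co-negative.

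\emph{Step 2 (the other copies).} Here only crude upper bounds are needed. A $C$-copy has a weight-$H$ edge to each of the other $5n-1$ vertices, so its row sum is at most $(5n-1)H$. A $B'$- or $C'$-copy has a weight-$H$ edge to each of the $4n-1$ copies outside $C$ and a weight-$4H$ edge to each of the $n$ copies in $C$, so its row sum is at most $(8n-1)H$. A $B$-copy has a weight-$H$ edge to each of the $4n-1$ copies outside $C$ and reaches each $v_C$ through the original edge $(u_B,v_C)$ of weight $3H+w(u,v)<4H$, so its row sum is less than $(8n-1)H$. All of these are strictly below $(16n-1)H$. Combining the two steps: if $G$ has a co-negative vertex $u$ then $d_{G'}(u_A,\cdot)$ sums to exactly $(16n-1)H$ and, by Step~2 together with the upper bound of Step~1, this is the maximum row sum; conversely, if the maximum row sum is $(16n-1)H$ then, as no $B,B',C,C'$ copy reaches $(16n-1)H$, it is attained by some $u_A$, which forces $\sum_v F(u,v)=0$ and hence $u$ co-negative.

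I expect the one genuinely fiddly point to be the no-shortcut verification in Step~1: one must rule out that any of the many inserted weight-$H$ edges (the within-set edges and the edges back into $A$) shortcuts one of the five distance families out of an $A$-copy, so that the exact value $(16n-1)H+\sum_v F(u,v)$ is preserved --- this is exactly where the choice $H\gg M$ is used. The Step~2 bounds, in contrast, are immediate from the presence of the new weight-$H$ and weight-$4H$ edges.
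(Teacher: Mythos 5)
Your proposal is correct and follows essentially the same route as the paper: reuse the Median analysis to get $\sum_v d_{G'}(u_A,v)=(16n-1)H+\sum_v F(u,v)$, bound every non-$A$ row sum by roughly $(8n-1)H$ (the paper attributes the maximum among these to $B'$ or $C'$, exactly as you compute), and conclude. You are in fact a bit more explicit than the paper about verifying that the newly added weight-$H$ and weight-$4H$ edges cannot shortcut any distance out of an $A$-copy, which is a worthwhile check that the paper leaves implicit by citing the Median section.
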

 \begin{proof}
For a vertex $u_{X}$ with $X\neq A$,  its sum of distances $\sum_{v\in V(G')}{d_{G'}(u_{X},v)}$ is smaller than $(8n-1)H$ (it is maximized when $X=B'$ or $X=C'$). For $u_{A}$, as shown in Section~\ref{sec:Negative Weight Triangle to Median}, the sum is exactly $(16n-1)H+ \sum_{v\in V(G)} F(u,v)$ where $F(u,v)= 
\min\{0, \min_{t\in V(G)}\{w(v,u)+w(u,t)+w(t,v)\}\}$. Since $H$ is large enough, the sum for $u_{A}$ is in $[(15n-1)H$,$(16n-1)H]$ (because $\sum_{v\in V(G)} F(u,v) \geq -3Mn >-Hn$) so the vertex with the maximal sum comes from $A$ and its sum is $(16n-1)H$ iff there is a vertex $u$ such that $\sum_{v} F(u,v)=0$ (i.e. $u$ does not belong to any negative triangle).
\end{proof}

Using the above claim, we now show how to find a vertex that does not belong to any negative triangle in $G$ (or report that no such vertex exists) using $O(\log n)$ executions of {\sc Max Row Sum Parity}. 
Assume $n$ is odd (otherwise add 3 vertices forming a negative triangle to $G$). Multiply all the edge weights of $G'$ by $4n$ (notice that this makes the max row sum value $(16n-1)H\cdot 4n$). Then, given a set of vertices $T\subseteq A$, (initiated to be $A$), pick an arbitrary subset $S$ of $T$ of half of its size. Temporarily subtract $1$ from the  $S$-to-$C$ edges. Now solve {\sc Max Row Sum Parity} on $G'$. If the maximal row sum is even, set $T\gets S$.  If the maximal row sum is odd, set $T\gets T/S$. We continue recursively for $O(\log n)$ steps until $T$ contains a single vertex. We then check if this vertex participates in a negative triangle in $G$. If it doesn't, then we conclude that no such vertex exists.

In contrary to the procedure of Section~\ref{subsec:NegativeTriangleToMedianParity}, the above procedure finds the vertex with the max row sum only when its value is $(16n-1)H$ (i.e. when there exists a vertex $u$ with no negative triangle). If $u_{A}\in T\backslash S$, then the max row sum value remains $4n\cdot(16n-1)H$ (even). If $u_{A}\in S$, then every distance from $u_{A}$ to a vertex $v_{C}$  decreases by $1$ (for every vertex $v_{C}$ the $u_{A}$-to-$v_{C}$ shortest path is the direct edge $(u_{A},v_{C})$) and its max row sum value is $4n\cdot(16n-1)H-n$ (odd).

\section{Min-plus Convolution to Tree Sparsity Parity}\label{sec:TreeSparsity}
The {\sc Tree Sparsity} problem is, given a node-weighted tree, to find the maximum weight of a subtree of size $k$. 
The {\sc Sum$_3$} problem~\cite{TreeSparsityToMinPlusConv} is, given three $n$-length vectors $A,B,C$ whose elements are integers in $[-M,M]$, to decide if there are $i,j$ such that $A[i]+B[j]+C[i+j]\geq 0$. 
In~\cite{TreeSparsityToMinPlusConv}, a deterministic subquadratic reduction is shown from {\sc Min-Plus convolution} to {\sc Tree Sparsity}: First, they give a (subquadratic) reduction from {\sc Min-Plus convolution} to {\sc Sum$_3$}. Then, they give a (subquadratic) reduction from {\sc Sum$_3$} to {\sc Tree Sparsity}.
Their reduction from {\sc Sum$_3$} to {\sc Tree Sparsity} creates an instance of {\sc Tree Sparsity} of size $O(n)$ in which the maximum weight subtree of size $k=n+2$ has weight $300M+10M(n+2)+\max_{i,j}\{A[i]+B[j]+C[i+j]\}$. 
We next show how to modify the {\sc Sum$_3$} to {\sc Tree Sparsity} reduction so that it reduces to {\sc Tree Sparsity Parity}.

For a subset $S \subseteq [n]$ let $f(S,i)=1$ if $i\in S$ and $0$ otherwise. We define the vectors $A_{S},B_{S}, C_{S}$ as follows: $B_{S}[i]=2B[i], C_{S}[i]=2C[i]$, and $A_{S}[i]=2A[i]+f(S,i)$. Similarly to the reduction in Section~\ref{subsec:NegativeTriangleToMedianParity}, we use a recursive algorithm with $O(\log n)$ iterations in order to find the index $i$ that maximizes the sum condition ($\max_{i,j}\{A[i]+B[j]+C[i+j]\}$). We initialize $S \leftarrow [n]$. This $S$ clearly   contains the index $i$ that maximizes the sum condition. We then arbitrarily choose a set $S'\subseteq S$ of half the size of $S$ and apply the reduction of~\cite{TreeSparsityToMinPlusConv} from {\sc Sum$_3$} to {\sc Tree Sparsity}, but with {\sc Tree Sparsity Parity} instead. Namely, we apply the reduction with the vectors $A_{S'}, B_{S'}, C_{S'}$ and obtain the parity of $\max_{i,j}\{A_{S'}[i]+B_{S'}[j]+C_{S'}[i+j]\}$. If it is odd, then there is some index $i\in S'$ that maximizes $\max_{i,j}\{2(A[i]+B[j]+C[i+j])+f(S',i)\}$ so we set $S\leftarrow S'$. 
If it is even, then there is no $i\in S'$ that maximizes the sum condition so we set $S\leftarrow S/S'$. We continue recursively for $O(\log n)$ steps until $S$ contains a single index $i'$. We then compute $\max_{j}\{A[i']+B[j]+C[i'+j]\}$ in $O(n)$ time and check whether the value is negative.

\section*{Acknowledgments}
We thank Rose Bader and Noam Licht for pointing out a mistake in an earlier version of Section~\ref{sec:RadiusAndDiameterParity}.


\begin{thebibliography}{10}

\bibitem{AbboudSetCover}
Amir Abboud.
\newblock Fine-grained reductions and quantum speedups for dynamic programming.
\newblock In {\em 46th {ICALP}}, pages 8:1--8:13, 2019.

\bibitem{ACAK2020}
Amir Abboud, Vincent Cohen-Addad, and Philip~N Klein.
\newblock New hardness results for planar graph problems in p and an algorithm
  for sparsest cut.
\newblock In {\em 52nd STOC}, 2020.
\newblock To appear.

\bibitem{AbboudEtAl}
Amir Abboud, Fabrizio Grandoni, and Virginia~Vassilevska Williams.
\newblock Subcubic equivalences between graph centrality problems, {APSP} and
  diameter.
\newblock In {\em 26th SODA}, pages 1681--1697, 2015.

\bibitem{abboud2014popular}
Amir Abboud and Virginia~Vassilevska Williams.
\newblock Popular conjectures imply strong lower bounds for dynamic problems.
\newblock In {\em 55th FOCS}, pages 434--443. IEEE, 2014.

\bibitem{arvind2002graph}
Vikraman Arvind and Piyush~P Kurur.
\newblock Graph isomorphism is in spp.
\newblock In {\em 43rd {FOCS}}, pages 743--750, 2002.

\bibitem{MaxSubArray}
Arturs Backurs, Nishanth Dikkala, and Christos Tzamos.
\newblock Tight hardness results for maximum weight rectangles.
\newblock In {\em 43rd {ICALP}}, pages 81:1--81:13, 2016.

\bibitem{TreeSparsityToMinPlusConv}
Arturs Backurs, Piotr Indyk, and Ludwig Schmidt.
\newblock Better approximations for tree sparsity in nearly-linear time.
\newblock In {\em 28th {SODA}}, pages 2215--2229, 2017.

\bibitem{beigel1998np}
Richard Beigel, Harry Buhrman, and Lance Fortnow.
\newblock Np might not be as easy as detecting unique solutions.
\newblock In {\em 30th {STOC}}, pages 203--208, 1998.

\bibitem{BBB19}
Enric Boix{-}Adser{\`{a}}, Matthew Brennan, and Guy Bresler.
\newblock The average-case complexity of counting cliques in
  erd{\H{o}}s-r{\'{e}}nyi hypergraphs.
\newblock In {\em 60th {FOCS}}, pages 1256--1280, 2019.

\bibitem{Complementary}
Mahdi Boroujeni, Sina Dehghani, Soheil Ehsani, Mohammad~Taghi Hajiaghayi, and
  Saeed Seddighin.
\newblock Subcubic equivalences between graph centrality measures and
  complementary problems.
\newblock {\em CoRR}, abs/1905.08127, 2019.

\bibitem{bouillaguet2019faster}
Charles Bouillaguet and Claire Delaplace.
\newblock Faster algorithms for the sparse random 3xor problem.
\newblock 2019.

\bibitem{bouillaguet2018revisiting}
Charles Bouillaguet, Claire Delaplace, and Pierre-Alain Fouque.
\newblock Revisiting and improving algorithms for the 3xor problem.
\newblock {\em IACR Transactions on Symmetric Cryptology}, pages 254--276,
  2018.

\bibitem{ConvolutionsAndNecklaces}
David Bremner, Timothy~M. Chan, Erik~D. Demaine, Jeff Erickson, Ferran Hurtado,
  John Iacono, Stefan Langerman, Mihai Patrascu, and Perouz Taslakian.
\newblock Necklaces, convolutions, and {X+Y}.
\newblock {\em Algorithmica}, 69(2):294--314, 2014.

\bibitem{SubsetSums}
Karl Bringmann.
\newblock Approximating subset sum is equivalent to min-plus-convolution.
\newblock {\em CoRR}, abs/1912.12529, 2019.

\bibitem{curticapean2019counting}
Radu Curticapean.
\newblock Counting problems in parameterized complexity.
\newblock In {\em 13th {IPEC 2018}}. Schloss Dagstuhl-Leibniz-Zentrum fuer
  Informatik, 2019.

\bibitem{OnProblemsAsHardAsCNFSAT}
Marek Cygan, Holger Dell, Daniel Lokshtanov, D{\'{a}}niel Marx, Jesper
  Nederlof, Yoshio Okamoto, Ramamohan Paturi, Saket Saurabh, and Magnus
  Wahlstr{\"{o}}m.
\newblock On problems as hard as {CNF-SAT}.
\newblock {\em {ACM} Trans. Algorithms}, 12(3):41:1--41:24, 2016.

\bibitem{MinPlusConvEquivalences}
Marek Cygan, Marcin Mucha, Karol Wegrzycki, and Michal Wlodarczyk.
\newblock On problems equivalent to (min,+)-convolution.
\newblock {\em {ACM} Trans. Algorithms}, 15(1):14:1--14:25, 2019.

\bibitem{dell}
Holger Dell.
\newblock Fine-grained complexity classification of counting problems.
\newblock \url{https://simons.berkeley.edu/talks/holger-dell-2016-03-28}, 2016.

\bibitem{ApproximateCountingProblems}
Holger Dell and John Lapinskas.
\newblock Fine-grained reductions from approximate counting to decision.
\newblock In {\em Proceedings of the 50th Annual {ACM} {SIGACT} Symposium on
  Theory of Computing, {STOC} 2018, Los Angeles, CA, USA, June 25-29, 2018},
  pages 281--288, 2018.

\bibitem{EricIO}
Erik~D. Demaine, Andrea Lincoln, Quanquan~C. Liu, Jayson Lynch, and
  Virginia~Vassilevska Williams.
\newblock Fine-grained {I/O} complexity via reductions: New lower bounds,
  faster algorithms, and a time hierarchy.
\newblock In {\em 9th ITCS}, pages 34:1--34:23, 2018.

\bibitem{dietzfelbinger2018subquadratic}
Martin Dietzfelbinger, Philipp Schlag, and Stefan Walzer.
\newblock A subquadratic algorithm for {3XOR}.
\newblock {\em CoRR}, abs/1804.11086, 2018.

\bibitem{flum2004parameterized}
J{\"o}rg Flum and Martin Grohe.
\newblock The parameterized complexity of counting problems.
\newblock {\em SIAM Journal on Computing}, 33(4):892--922, 2004.

\bibitem{fortnow1997counting}
Lance Fortnow.
\newblock Counting complexity.
\newblock {\em Complexity theory retrospective II}, pages 81--107, 1997.

\bibitem{ComputationalGeo3SUM}
Anka Gajentaan and Mark~H. Overmars.
\newblock On a class of $o(n^2)$ problems in computational geometry.
\newblock {\em Comput. Geom.}, 5:165--185, 1995.

\bibitem{Goldreich-RonFOCS2018}
Oded Goldreich and Guy~N. Rothblum.
\newblock Counting t-cliques: Worst-case to average-case reductions and direct
  interactive proof systems.
\newblock In {\em 59th {FOCS}}, pages 77--88, 2018.

\bibitem{Best3Sum}
Allan Gr{\o}nlund and Seth Pettie.
\newblock Threesomes, degenerates, and love triangles.
\newblock {\em J. {ACM}}, 65(4):22:1--22:25, 2018.

\bibitem{jafargholi2016mathrm}
Zahra Jafargholi and Emanuele Viola.
\newblock 3xor,3sum, triangles.
\newblock {\em Algorithmica}, 74(1):326--343, 2016.

\bibitem{King2004}
James King.
\newblock A survey of 3SUM-hard problems.
\newblock 2019.

\bibitem{kopelowitz2016higher}
Tsvi Kopelowitz, Seth Pettie, and Ely Porat.
\newblock Higher lower bounds from the 3sum conjecture.
\newblock In {\em 27th SODA}, pages 1272--1287. SIAM, 2016.

\bibitem{DynamicProgramingEquivalences}
Marvin K{\"{u}}nnemann, Ramamohan Paturi, and Stefan Schneider.
\newblock On the fine-grained complexity of one-dimensional dynamic
  programming.
\newblock In {\em 44th {ICALP}}, pages 21:1--21:15, 2017.

\bibitem{MaxConsSubsum}
Eduardo~Sany Laber, Wilfredo~Bardales Roncalla, and Ferdinando Cicalese.
\newblock On lower bounds for the maximum consecutive subsums problem and the
  (min,+)-convolution.
\newblock In {\em 11th {ISIT}}, pages 1807--1811, 2014.

\bibitem{FineGrainedCryptography}
Rio LaVigne, Andrea Lincoln, and Virginia~Vassilevska Williams.
\newblock Public-key cryptography in the fine-grained setting.
\newblock In {\em 39th {CRYPTO}}, pages 605--635, 2019.

\bibitem{UnweightedCoinProblem}
Andrea Lincoln, Adam Polak, and Virginia~Vassilevska Williams.
\newblock Monochromatic triangles, intermediate matrix products, and
  convolutions.
\newblock In {\em 11th {ITCS}}, pages 53:1--53:18, 2020.

\bibitem{papadimitriou1982two}
Christos~H Papadimitriou and Stathis~K Zachos.
\newblock Two remarks on the power of counting.
\newblock In {\em Theoretical Computer Science}, pages 269--275. 1982.

\bibitem{patrascu2010towards}
Mihai Patrascu.
\newblock Towards polynomial lower bounds for dynamic problems.
\newblock In {\em 42nd STOC}, pages 603--610, 2010.

\bibitem{RV13}
L.~Roditty and V.~{Vassilevska Williams}.
\newblock Fast approximation algorithms for the diameter and radius of sparse
  graphs.
\newblock In {\em 45th {STOC}}, pages 515--524, 2013.

\bibitem{roditty-zwickESA04}
Liam Roditty and Uri Zwick.
\newblock On dynamic shortest paths problems.
\newblock In {\em 12th {ESA}}, pages 580--591, 2004.

\bibitem{ContextFreeGrammars}
Barna Saha.
\newblock Language edit distance and maximum likelihood parsing of stochastic
  grammars: Faster algorithms and connection to fundamental graph problems.
\newblock In {\em 6th {FOCS}}, pages 118--135, 2015.

\bibitem{Seidel}
Raimund Seidel.
\newblock On the all-pairs-shortest-path problem in unweighted undirected graphs.
\newblock {\em J. Comput. Syst. Sci}., 51(3):400–403, 1995.

\bibitem{MaxArrayToMatrixMulti}
Tadao Takaoka.
\newblock Efficient algorithms for the maximum subarray problem by distance
  matrix multiplication.
\newblock {\em Electr. Notes Theor. Comput. Sci.}, 61:191--200, 2002.

\bibitem{TodasTheorem}
Seinosuke Toda.
\newblock {PP} is as hard as the polynomial-time hierarchy.
\newblock {\em {SIAM} J. Comput.}, 20(5):865--877, 1991.

\bibitem{valiant1979complexity}
Leslie~G Valiant.
\newblock The complexity of computing the permanent.
\newblock {\em Theoretical computer science}, 8(2):189--201, 1979.

\bibitem{valiant2006accidental}
Leslie~G Valiant.
\newblock Accidental algorthims.
\newblock In {\em 47th {FOCS}}, pages 509--517, 2006.

\bibitem{valiant2018some}
Leslie~G Valiant.
\newblock Some observations on holographic algorithms.
\newblock {\em computational complexity}, 27(3):351--374, 2018.

\bibitem{valiant1985np}
Leslie~G Valiant and Vijay~V Vazirani.
\newblock Np is as easy as detecting unique solutions.
\newblock In {\em 17th {STOC}}, pages 458--463, 1985.

\bibitem{vasco2001survey}
Maria Isabel~Gonz{\'a}lez Vasco and Mats N{\"a}slund.
\newblock A survey of hard core functions.
\newblock In {\em Cryptography and Computational Number Theory}, pages
  227--255. 2001.

\bibitem{OrthogonalVectorCounting}
R.~Ryan Williams.
\newblock Counting solutions to polynomial systems via reductions.
\newblock In {\em 1st {SOSA}}, pages 6:1--6:15, 2018.

\bibitem{WilliamsAPSP}
Ryan Williams.
\newblock Faster all-pairs shortest paths via circuit complexity.
\newblock {\em {SIAM} J. Comput.}, 47(5):1965--1985, 2018.

\bibitem{OVDiscreteStructures}
Ryan Williams and Huacheng Yu.
\newblock Finding orthogonal vectors in discrete structures.
\newblock In {\em 25th {SODA}}, pages 1867--1877, 2014.

\bibitem{FineGrainedQuestions}
Virginia~Vassilevska Williams.
\newblock On some fine-grained questions in algorithms and complexity.
\newblock In {\em ICM}, 2018.
\newblock Invited talk.

\bibitem{ZWT}
Virginia~Vassilevska Williams and Ryan Williams.
\newblock Finding, minimizing, and counting weighted subgraphs.
\newblock In {\em 41st {STOC}}, pages 455--464, 2009.

\bibitem{WilliamsNegativeTriangle}
Virginia~Vassilevska Williams and Ryan Williams.
\newblock Subcubic equivalences between path, matrix, and triangle problems.
\newblock {\em J. {ACM}}, 65(5):27:1--27:38, 2018.




\end{thebibliography}
\end{document}